\theoremstyle{definition}
\newtheorem{thm}{Theorem}
\newtheorem{lem}[thm]{Lemma}
\newtheorem{dfn}[thm]{Definition}
\theoremstyle{remark}
\newtheorem{ex}[thm]{Example}
\newtheorem{rem}[thm]{Remark}
\newcommand{\Z}{\mathbb{Z}}
\def\ps@pprintTitle{%
 \let\@oddhead\@empty
 \let\@evenhead\@empty
 \def\@oddfoot{\centerline{\thepage}}%
 \let\@evenfoot\@oddfoot}
\begin{document}

\begin{frontmatter}

\title{A linear algorithm for Brick Wang tiling}

\author[A. Derouet-Jourdan]{Alexandre Derouet-Jourdan}
\ead{alexandre.derouet-jourdan@olm.co.jp}
\address[A. Derouet-Jourdan]{OLM Digital Inc.}

\author[S. Kaji]{Shizuo Kaji\corref{cor1}}
\ead{skaji@yamaguchi-u.ac.jp}
\address[S. Kaji]{Yamaguchi University / JST PRESTO}
\cortext[cor1]{Corresponding author}

\author[Y. Mizoguchi]{Yoshihiro Mizoguchi}
\ead{ym@imi.kyushu-u.ac.jp}
\address[Y. Mizoguchi]{Kyushu University}

\begin{abstract}
The {\em Wang tiling} is a classical problem in combinatorics.
A major theoretical question is to find a (small) set of tiles which
tiles the plane only aperiodically.
In this case, resulting tilings are rather restrictive.
On the other hand, Wang tiles are used as a tool to generate textures and patterns in computer graphics. 
In these applications, a set of tiles is normally chosen so that
it tiles the plane or its sub-regions easily in many different ways.
With computer graphics applications in mind, 
we introduce a class of such tileset, which we call \emph{sequentially permissive} tilesets,
and consider tiling problems with {\em constrained boundary}.
We apply our methodology to a special set of Wang tiles, called Brick Wang tiles,
 introduced by Derouet-Jourdan et al. in 2015 to model wall patterns.
We generalise their result by providing a linear algorithm to
decide and solve the tiling problem for arbitrary planar regions with holes.
\end{abstract}

\begin{keyword}
Wang tile \sep graphical model \sep satisfiability problem
\MSC[2010] 05B45 \sep 52C20 \sep 68R10
\end{keyword}

\end{frontmatter}

\section{Introduction}
Wang tiles are a class of formal systems introduced by Hao Wang in 1961~\cite{wang1961}.
A Wang prototile can be thought of as a square tile with a colour on each side.
Once a set of such prototiles is given, 
copies of prototiles are placed side by side so that the colours of common edges match.
Given a finite set of Wang prototiles,
the {\em domino problem} asks whether a whole plane can be tiled with copies of them,
and was proved undecidable by R.~Berger in 1966~\cite{berger1966}.
There are also sets of Wang prototiles which can tile the plane but only aperiodically
(\cite{berger1966,kari1996,culik1996,jeandel2015}).

In practical applications, bounded planar regions are usually considered,
with which the corresponding domino problem is obviously decidable.
Moreover, sets of Wang prototiles are normally restricted to a permissive class so that 
there exist many solutions in general.
In this case, it is more reasonable to ask for efficient algorithms
to decide tilability with constrained boundary and give solutions if they exist.
We make clear our framework in \S \ref{problem-setting}.
In general, the problem is known to be NP-complete \cite{Le},
so we restrict ourselves to a certain sub-problem.
Namely, we deal with a specific setting which arises in computer graphics applications
such as procedural synthesis of textures, height fields, and sampling points~\cite{cohen2003,kopf2006,stam1997}.
In this case, sets of Wang prototiles are often {\em sequentially permissive} (Def. \ref{3const-property}),
and we have a general strategy described in Lemma \ref{solvability}.
In \S \ref{Brick-Wang-Tiles},
we specialise to the tiling problems with the {\em Brick Wang tiles} (Def. \ref{def:brick-tiles}) introduced in \cite{jourdan2015}
to generate wall pattern textures.
It is proved in \cite{jourdan2015} that any rectangle region greater than the $2\times 2$ square with any boundary colour condition can be tiled. This result was then formally verified and implemented in~\cite{matsushima2016} using the Coq proof assistant~\cite{Coq}.
We generalise the result by giving a complete characterisation of
 the set of regions which can be tiled with any given boundary constraints.
In \S \ref{Solver},
we provide a linear time algorithm for the Brick Wang tiling problem,
with an implementation with source codes at \cite{impl}.

\section{Problem setting}\label{problem-setting}
We begin with introducing a type of 
Wang tiling problem
which can be regarded as an {\em extension problem}
or a {\em tiling problem with boundary constraints}.
Then, we consider a class of Wang prototiles having a
special property \eqref{3const-property}
and discuss a strategy for a tiling algorithm
(Lemma \ref{solvability}).

We first define the object to be tiled as an enriched graph.
\begin{dfn}
A {\em board} 
is a finite undirected graph $G=(V\cup \{v_0\},E)$
with a distinguished vertex $v_0$ called the 
{\em constrainer} which satisfies the following:
\begin{itemize}
    \item at each $v\in V$, the set of edges $N(v)$ 
    adjacent to $v$ is ordered: $N(v)=\{e_1(v),\ldots,e_{\deg(v)}(v)\}$, where $\deg(v)$ is the degree of $v$
    \item the full sub-graph on $V$ is connected.
\end{itemize}
Vertices in $V$ are called {\em cells}.
We often identify a board with its underlying graph
when there is no confusion.
\end{dfn}

Figure~\ref{fig:graphtile} shows an intuitive correspondence
between our graphical formalisation and the usual Wang tile.
\begin{figure}[ht!]
\center
\begin{minipage}{0.55\textwidth}
 \includegraphics[width=\textwidth]{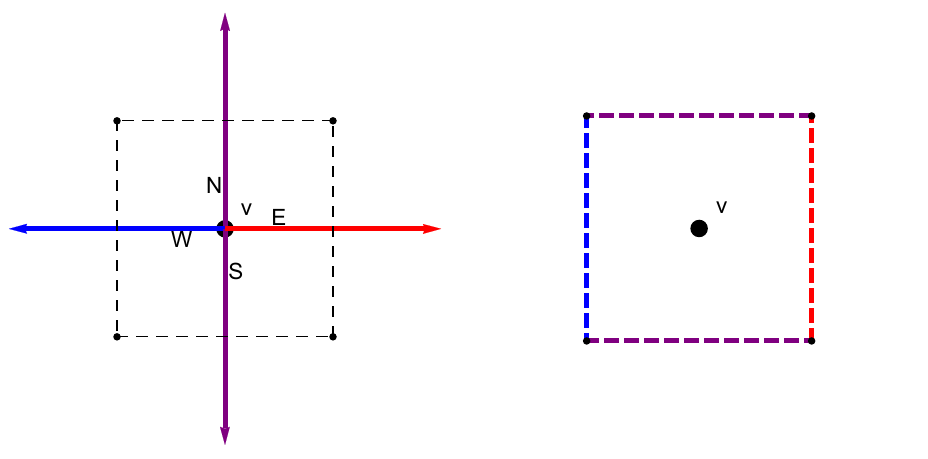}
\caption{Graph vs Wang tile}\label{fig:graphtile}
\end{minipage} 
\end{figure}

\begin{dfn}
A {\em tiling problem} $(G,C,W)$ consists of 
\begin{itemize}
\item a board $G=(V\cup\{v_0\},E)$
\item a set of colours $C$
\item a set of {\em prototiles} $W=\bigcup_{k=1}^\infty W_k$, where $W_k\subset C^k$.
\end{itemize}
A {\em tiling} of a subset $F\subset E$
is a map $\tau:F \to C$ satisfying 
$(\tau(e_1(v)), \ldots, \tau(e_{\deg(v)}(v))) \in W$ for any $v\in V$ with $N(v)\subset F$.
It is said to be {\em full} if $F=E$.
A tiling $\tau':F'\to C$ is called an {\em extension}
of another tiling $\tau:F\to C$ with $F\subset F'$
if $\tau(e)=\tau'(e)$ for any $e\in F$.
\end{dfn}
We are particularly interested in 
finding a full extension $\tau: E\to C$ of
a given tiling
$\beta: N(v_0)\to C$,
which we regard as {\em boundary constraints}.
Edges in $N(v_0)$ are called 
{\em constrained legs} and those in $E\setminus N(v_0)$
are called {\em free legs}.
Given boundary constraints $\beta: N(v_0)\to C$,
the tiling problem with $\beta$ is said to be
\emph{solved} if we find a full extension of $\beta$.
A tiling problem is said to be {\em always solvable}
when there exists a full extension $\tau: E\to C$ of
any tiling $\beta: N(v_0)\to C$.
Given a tiling $\tau:F\to C$,
we informally say to {\em tile} $v\in V$
with $w\in W$
when we can extend $\tau$ by assigning 
$\tau(e_i(v))=w_i\ (1\le i\le \deg(v))$.

\begin{rem}
This formulation encompasses tiling of non-planar graphs such as a surface in the three dimensional space (\cite{Chi}).
For example, a periodic tiling of a planar grid can be regarded as a tiling of a grid graph on the torus.
\end{rem}

\begin{ex}\label{ex:first}
For $n$ and $m$ be natural numbers, we consider a rectangular region
\begin{eqnarray*}
V&=&\{v_{i,j}\,|\,1\le j \le n, 1\le i \le m\},\mbox{\ and}\\
E&=&\{(v_{i,j},v_{i+1,j})\,|\,0\le i \le m, 1\le j \le n\}
\cup \{(v_{i,j},v_{i,j+1})\,|\,1\le i \le m, 0\le j \le n\},
\end{eqnarray*}
where we set $v_{i,j}=v_0$ for $i=0$, $i=m+1$, $j=0$, or $j=n+1$.
For each $v_{i,j} \in V$, the set of edges $N(v_{i,j})$
adjacent to $v_{i,j}$
consists of four edges $e_E(v_{i,j})=(v_{i,j},v_{i+1,j})$,
$e_N(v_{i,j})=(v_{i,j},v_{i,j+1})$,
$e_W(v_{i,j})=(v_{i,j},v_{i-1,j})$,
and $e_S(v_{i,j})=(v_{i,j},v_{i,j-1})$.
The order of $N(v_{i,j})$ is given
by $e_E$, $e_N$, $e_W$, and $e_S$.
These data define a board $G_{m,n}=(V\cup \{v_0\},E)$.

Let $C=\{Red,Blue,Purple,Green\}$,
 $W_E$ be as in Figure~\ref{fig:we},
and consider the tiling problem $(G_{5,3}, C, W_E)$.
We set boundary constraints $\beta: N(v_0)\to C$
 as in Figure~\ref{fig:boundary}.
We note that edges connected to $v_0$ are denoted by arrows
in the Figure.
Also we consider double edges between
$v_0$ and $v_{1,1}$, $v_{5,1}$, $v_{1,3}$ and $v_{5,3}$
which may have different colours.
An example of full extension of $\beta$ is given by
$\tau:E \to C$ defined as in Figure~\ref{fig:tiling}.

\begin{figure}
\begin{minipage}{0.35\textwidth}
 \includegraphics[width=\textwidth]{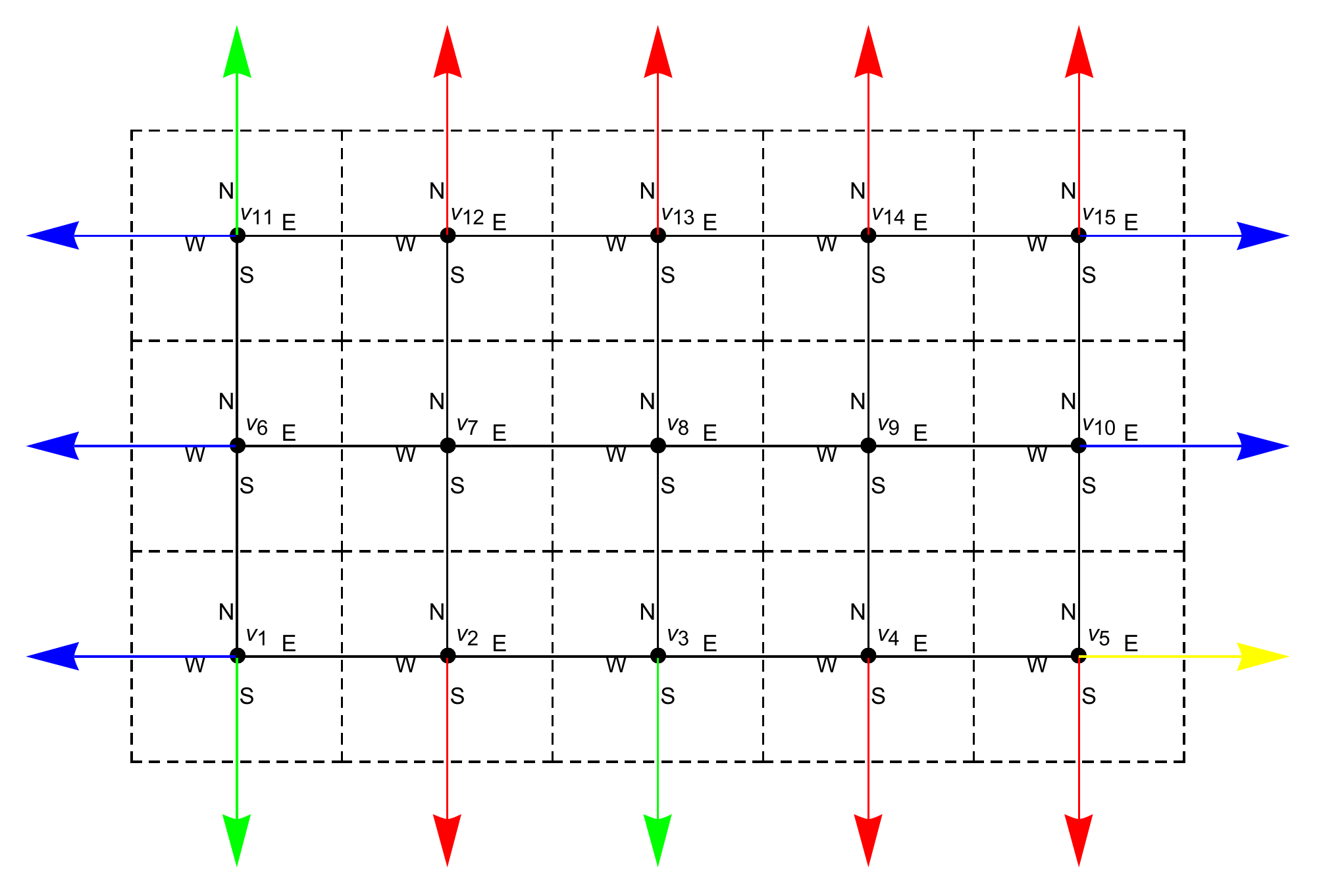}
\caption{Boundary constraints}\label{fig:boundary}
\end{minipage} 
\begin{minipage}{0.35\linewidth}
 \includegraphics[width=\textwidth]{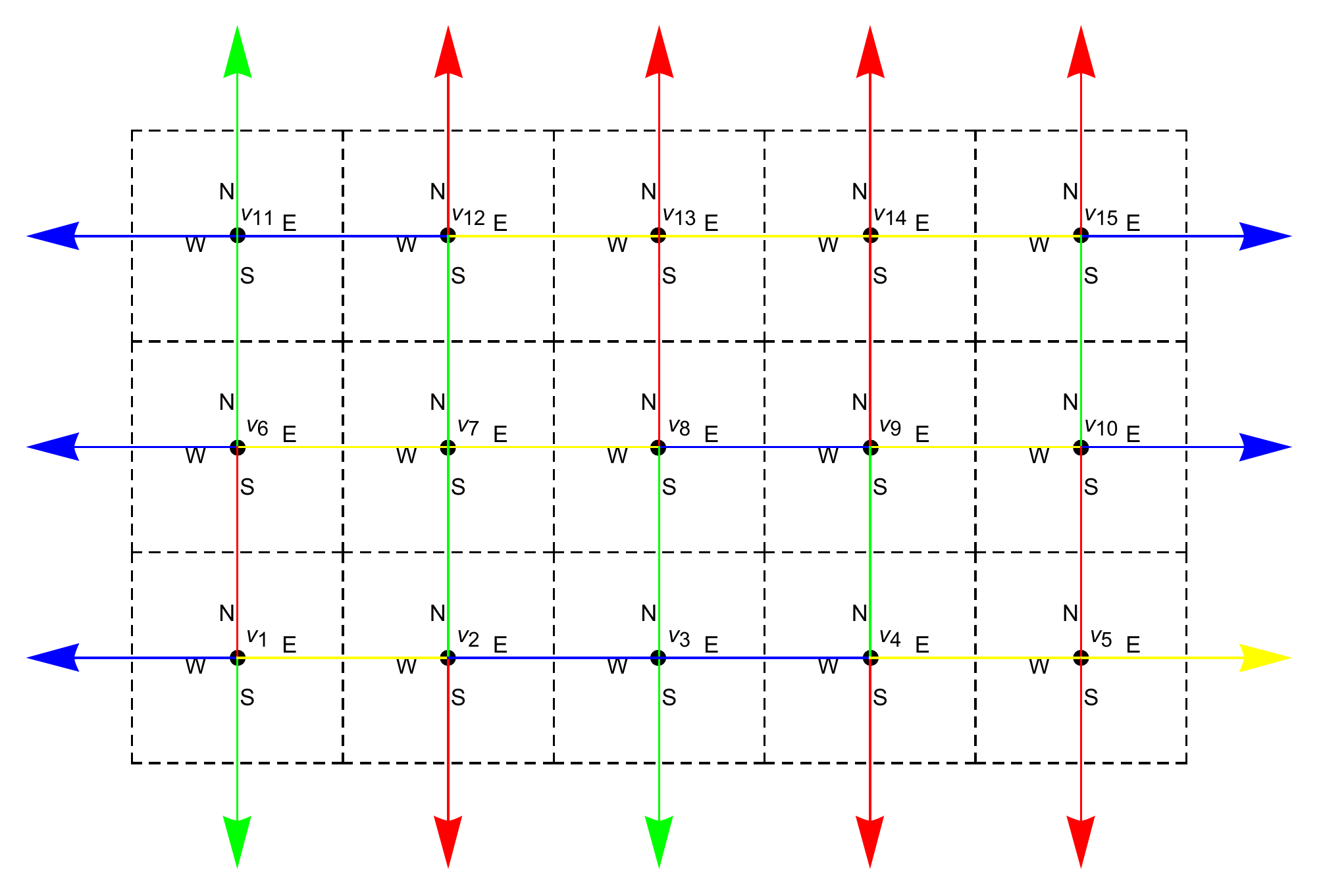}
\caption{A full extension}\label{fig:tiling}
\end{minipage} 
\begin{minipage}{0.25\linewidth}
\includegraphics[width=\textwidth]{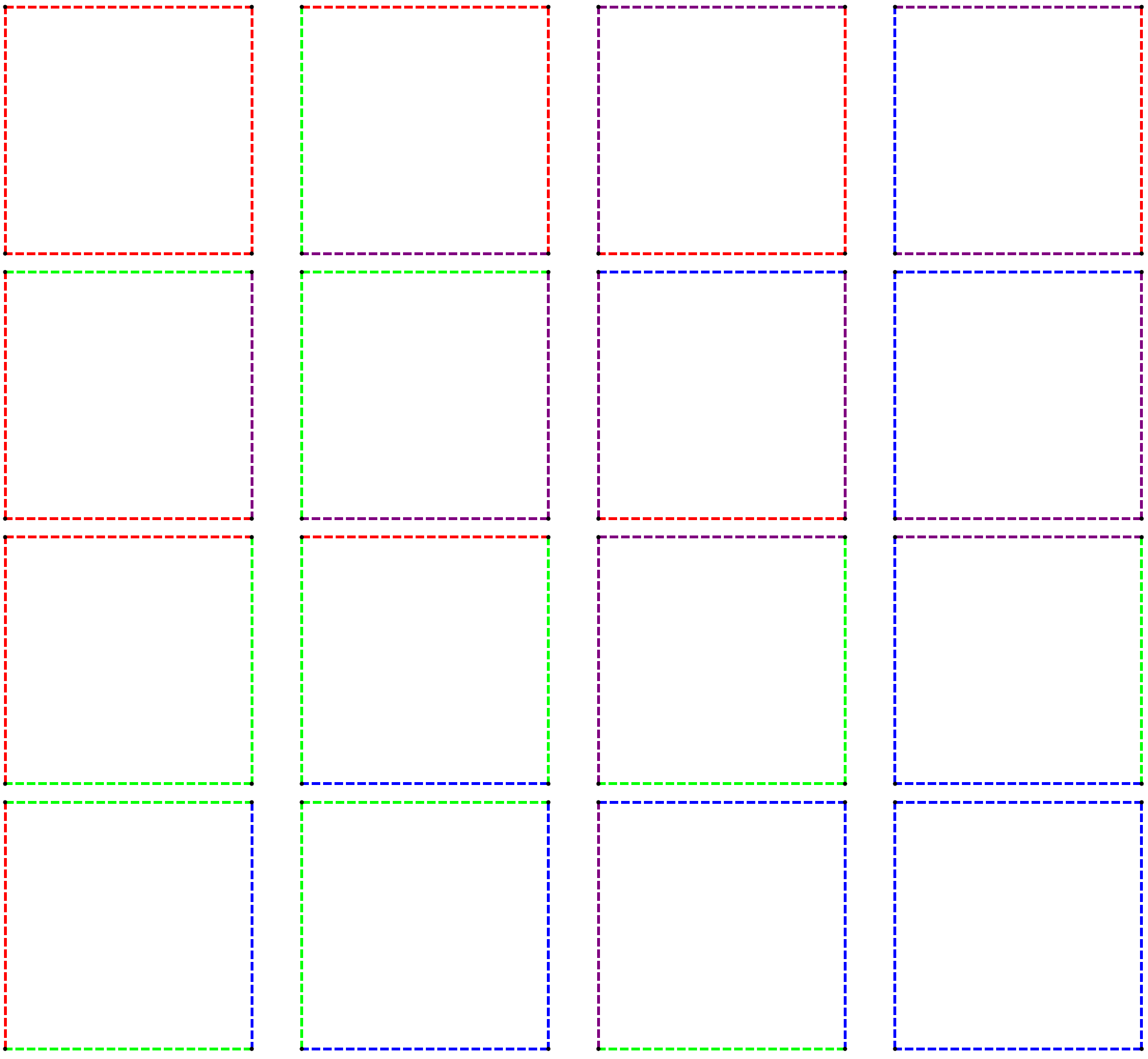}
\caption{Wang prototiles in $W_E$}\label{fig:we}
\end{minipage} 
\end{figure}

\end{ex}

\begin{ex}\label{ex:corner}
Here we give an example of a board which is not a simple graph.
For natural numbers $n$ and $m$, we consider again a rectangular region
\begin{eqnarray*}
V&=&\{v_{i,j}\,|\,1\le j \le n, 1\le i \le m\}.
\end{eqnarray*}
This time however the edge set is given as follows:
for each $v_{i,j}\in V$, 
the set of adjacent edges is given by
\begin{eqnarray*}
N(v_{i,j})&=&\{
e_{ENE},e_{NE},e_{NNE},
e_{NNW},e_{SE},e_{WNW},
e_{WSW},e_{SW},e_{SSW},
e_{SSE},e_{NW},e_{ESE}\},
\end{eqnarray*}
where the target of each edge is respectively
\[
v_{i+1,j}, ,v_{i+1,j+1}, ,v_{i,j+1}, v_{i,j+1},
v_{i-1,j+1}, v_{i-1,j}, v_{i-1,j}, v_{i-1,j-1},
v_{i,j-1}, v_{i,j-1}, v_{i+1,j-1}, v_{i+1,j}.
\]
(See Figure~\ref{fig:cornertile}.)
Here our convention is $v_{i,j}=v_0$ for $i=0$, $i=m+1$, $j=0$ or $j=n+1$.
These data define a board $G'_{m,n}=(V\cup \{v_0\},E=\bigcup_{v\in V}N(v))$.
The {\bf corner Wang tiling} introduced in \cite{lagae2006}
is a special instance of tiling problem $(G'_{m,n},C,W_C)$, where
$W_C=\{(c_i)\in C^{12} \mid c_{3k+1}=c_{3k+2}=c_{3k+3}, k=0,1,2,3\}$.
In Figure~\ref{fig:cornertile}, We show an example of a correspondence
between our graphical formalisation and the corner Wang tile in
\cite{lagae2006}.

For example, let $C=\{Red,Blue\}$,
and consider a tiling problem $(G'_{2,3},C,W_C)$.
Figure~\ref{fig:cornertiling} gives a solution to the problem.
\begin{figure}[ht]
\begin{center}
\begin{minipage}{0.45\textwidth}
 \includegraphics[width=\textwidth]{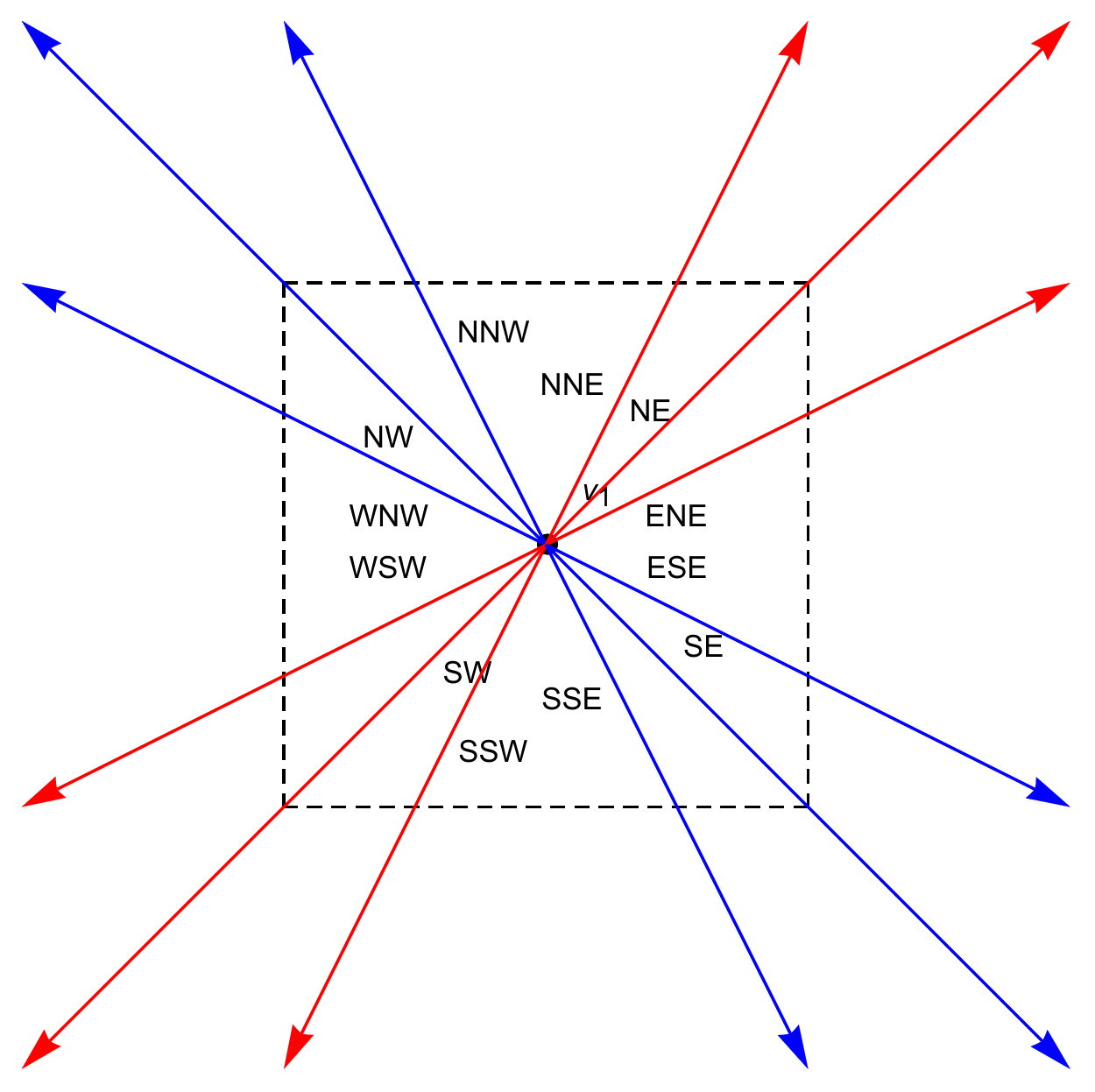}
\caption{Corner Wang tile}\label{fig:cornertile}
\end{minipage} 
\begin{minipage}{0.45\linewidth}
 \includegraphics[width=\textwidth]{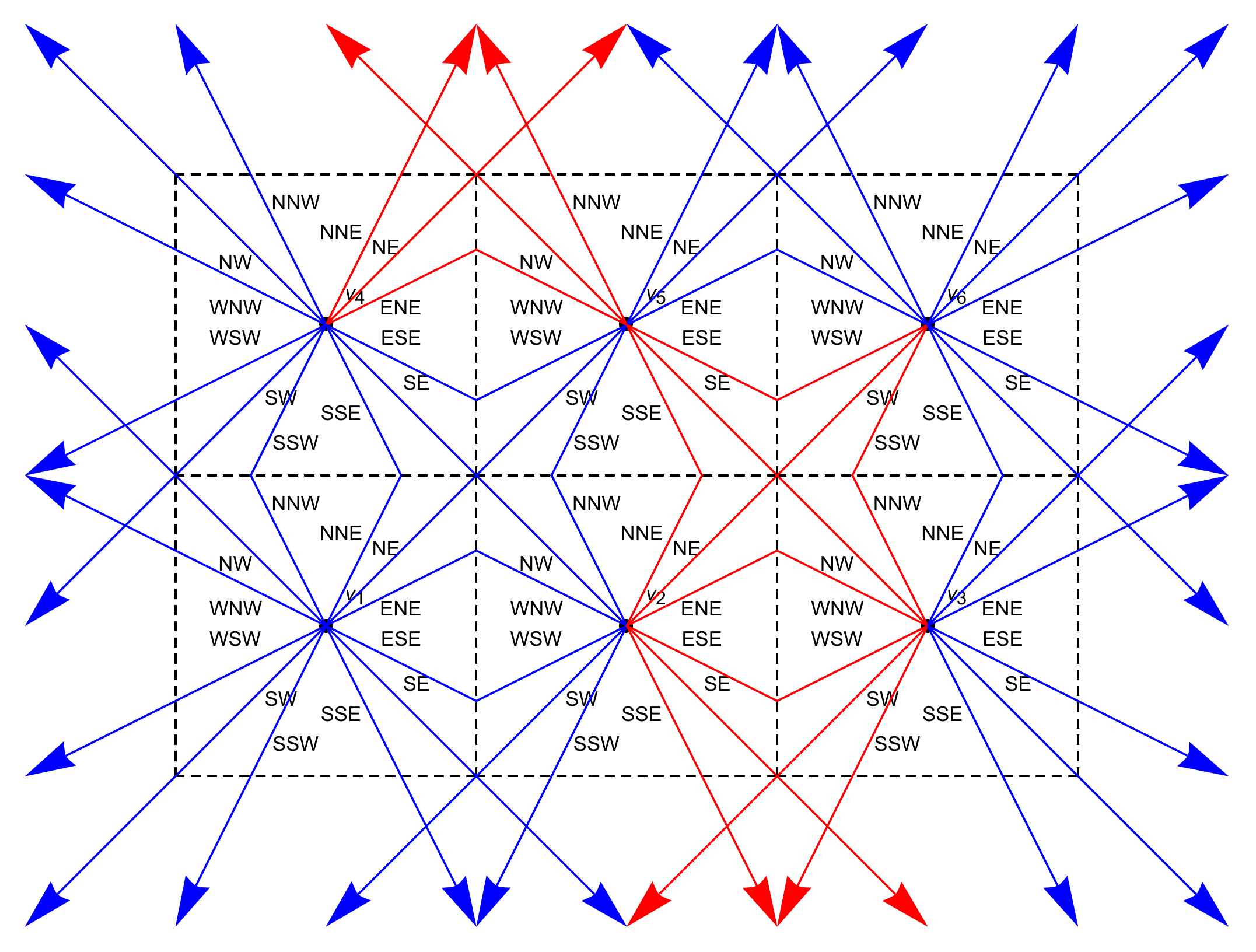}
\caption{Corner Wang tiling}\label{fig:cornertiling}
\end{minipage} 
\end{center}
\end{figure}

As mentioned in \cite{lagae2006},
a corner Wang tiling is equivalent to the ordinary Wang tiling with more colours.
Let $\tilde{C}=\{Red,Blue,Purple,Green\}$.
We think of a purple edge having blue and red ends,
and a green edge having red and blue ends.
The tiling problem $(G_{2,3},\tilde{C},W_E)$ defined in Example~\ref{ex:first}
is equivalent to $(G'_{2,3},C,W_C)$.
A solution of $(G_{2,3},\tilde{C},W_E)$ in Figure~\ref{fig:cornertiling}
corresponds to a solution of $(G'_{2,3},C,W_C)$
in Figure~\ref{fig:equivalenttiling}.
In Figure~\ref{fig:edgetiling}, we display the solution
using usual square Wang tiles and corner Wang tiles.

\begin{figure}[ht]
\begin{center}
\begin{minipage}{0.3\linewidth}
\includegraphics[width=\textwidth]{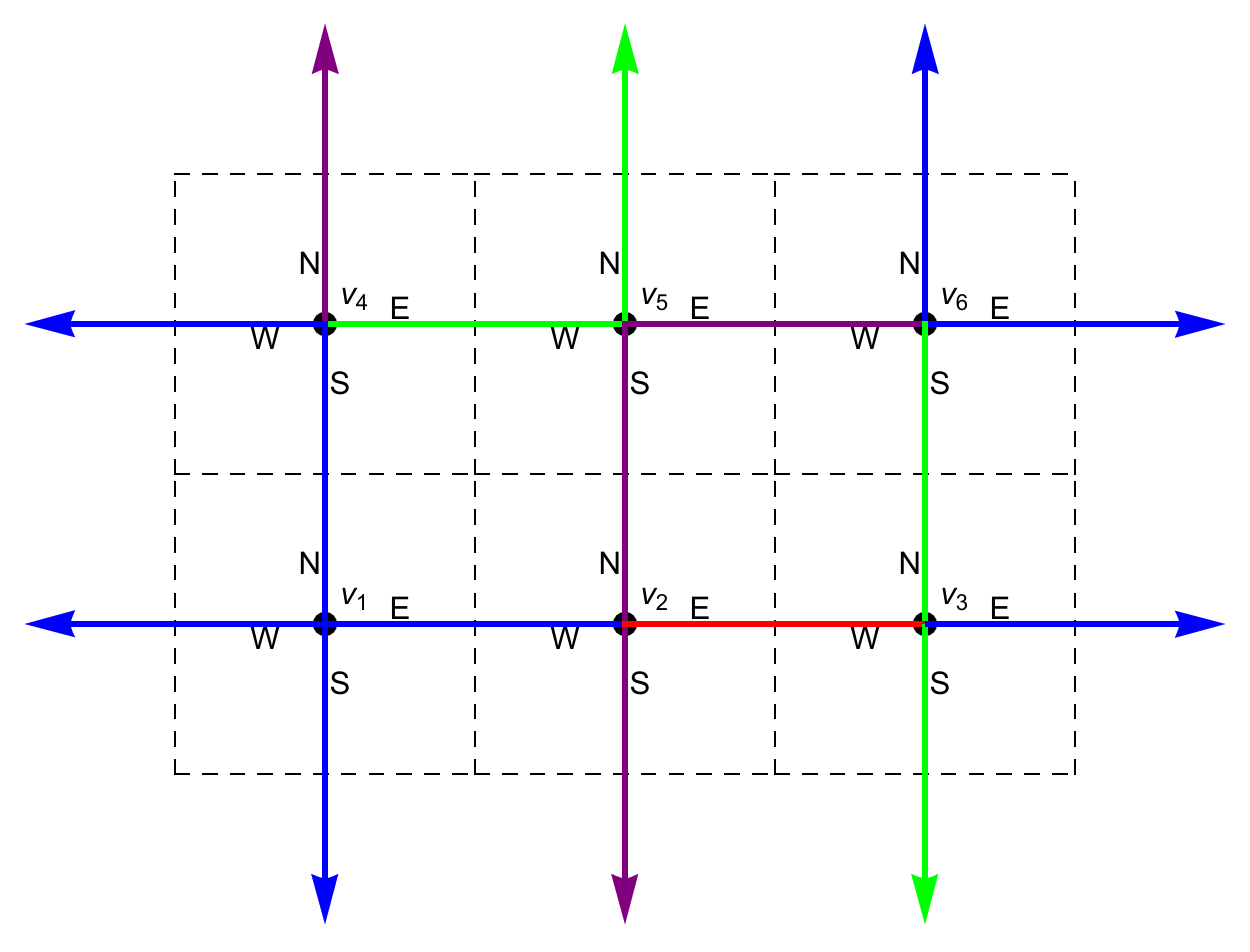}
\caption{Equivalent Wang tiling}\label{fig:equivalenttiling}
\end{minipage} 
\begin{minipage}{0.65\linewidth}
\includegraphics[width=\textwidth]{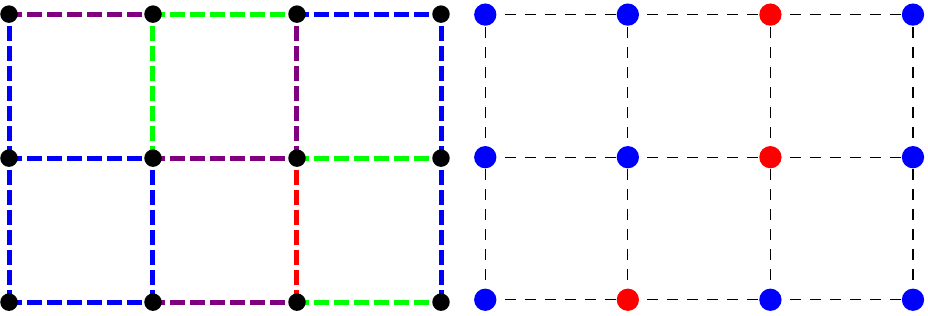}
\caption{Edge Wang tiling and Corner Wang tiling}\label{fig:edgetiling}
\end{minipage} 
\end{center}
\end{figure}

\end{ex}

\begin{dfn}
For a board $G=(V\cup\{v_0\},E)$ and its sub-graph $(U,F)$,
where $U\subset V$,
the {\em restriction} $G|_{(U,F)}$ of $G$ is 
a board obtained by contracting the complement of $(U,F)$ to the constrainer $v_0$:
The vertex set is the union $U \cup \{v_0\}$
and the edge set is
\[
E|_{(U,F)}=F \cup \{(u,v_0) \mid u \in U, (u,v) \in E\setminus F\}.
\]
That is, an edge not belonging to $F$ is cut into two pieces and each of them is reconnected to $v_0$.
For a full sub-graph of $G$ on $U\subset V$,
we abbreviate the corresponding restriction as $G|_U$.
\end{dfn}

In many applications of Wang tiling in computer graphics,
the set of Wang prototiles often comes equipped with a nice property,
which is rather permissive in terms of tilability.
\begin{dfn}
For a natural number $k$, 
a set $W$ of prototiles is said to be {\em $k$-sequentially permissive} 
if the following holds:
\begin{align}\label{3const-property}
\pi_{\bar{i}}: W_{k} \to C^{k-1} \text{ is surjective for all } 1\le i\le k,
\end{align}
where $\pi_{\bar{i}}$ is the projection dropping the $i$-th factor.
In other words, a cell with $k$-legs 
can always be tiled when it has at least one free leg.
\end{dfn}

\begin{ex}
Figure~\ref{fig:example} shows an example, taken from~\cite{cohen2003}, of a set of prototiles
in computer graphics which is $4$-sequentially permissive.
In practical applications, the colours
    are replaced by parts of an image to generate textures. Essentially, each side of an edge is one half of an image cut on the diagonal. That way, when tiles are placed adjacently, the transition from one tile to another in the picture is smooth. When all colours are taken from pieces of a given image, and the Wang tiles stitch together the colours, it produces arbitrary large textures from a small example. In~\cite{cohen2003}, this technique is applied to create large Poisson distribution of points by replacing the image parts with carefully designed point distributions.
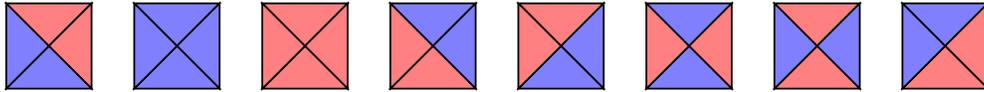
\begin{figure}[ht]
    \centering
    \resizebox{0.75\linewidth}{!}{
	\begin{tikzpicture}
		\begin{scope}
			\filldraw[color=black, fill=red!50, very thick] (0,0) -- (2,0) -- (1,-1) -- cycle;
			\filldraw[color=black, fill=red!50, very thick] (2,0) -- (2,-2) -- (1,-1) -- cycle;
			\filldraw[color=black, fill=blue!50, very thick] (2,-2) -- (0,-2) -- (1,-1) -- cycle;
			\filldraw[color=black, fill=blue!50, very thick] (0,-2) -- (0,0) -- (1,-1) -- cycle;
		\end{scope}
		\begin{scope}[xshift=3cm]
			\filldraw[color=black, fill=blue!50, very thick] (0,0) -- (2,0) -- (1,-1) -- cycle;
			\filldraw[color=black, fill=blue!50, very thick] (2,0) -- (2,-2) -- (1,-1) -- cycle;
			\filldraw[color=black, fill=blue!50, very thick] (2,-2) -- (0,-2) -- (1,-1) -- cycle;
			\filldraw[color=black, fill=blue!50, very thick] (0,-2) -- (0,0) -- (1,-1) -- cycle;
		\end{scope}
		\begin{scope}[xshift=6cm]
			\filldraw[color=black, fill=red!50, very thick] (0,0) -- (2,0) -- (1,-1) -- cycle;
			\filldraw[color=black, fill=red!50, very thick] (2,0) -- (2,-2) -- (1,-1) -- cycle;
			\filldraw[color=black, fill=red!50, very thick] (2,-2) -- (0,-2) -- (1,-1) -- cycle;
			\filldraw[color=black, fill=red!50, very thick] (0,-2) -- (0,0) -- (1,-1) -- cycle;
		\end{scope}
		\begin{scope}[xshift=9cm]
			\filldraw[color=black, fill=blue!50, very thick] (0,0) -- (2,0) -- (1,-1) -- cycle;
			\filldraw[color=black, fill=blue!50, very thick] (2,0) -- (2,-2) -- (1,-1) -- cycle;
			\filldraw[color=black, fill=red!50, very thick] (2,-2) -- (0,-2) -- (1,-1) -- cycle;
			\filldraw[color=black, fill=red!50, very thick] (0,-2) -- (0,0) -- (1,-1) -- cycle;
		\end{scope}
		\begin{scope}[xshift=12cm]
			\filldraw[color=black, fill=red!50, very thick] (0,0) -- (2,0) -- (1,-1) -- cycle;
			\filldraw[color=black, fill=blue!50, very thick] (2,0) -- (2,-2) -- (1,-1) -- cycle;
			\filldraw[color=black, fill=blue!50, very thick] (2,-2) -- (0,-2) -- (1,-1) -- cycle;
			\filldraw[color=black, fill=red!50, very thick] (0,-2) -- (0,0) -- (1,-1) -- cycle;
		\end{scope}
		\begin{scope}[xshift=15cm]
			\filldraw[color=black, fill=blue!50, very thick] (0,0) -- (2,0) -- (1,-1) -- cycle;
			\filldraw[color=black, fill=red!50, very thick] (2,0) -- (2,-2) -- (1,-1) -- cycle;
			\filldraw[color=black, fill=blue!50, very thick] (2,-2) -- (0,-2) -- (1,-1) -- cycle;
			\filldraw[color=black, fill=red!50, very thick] (0,-2) -- (0,0) -- (1,-1) -- cycle;
		\end{scope}
		\begin{scope}[xshift=18cm]
			\filldraw[color=black, fill=red!50, very thick] (0,0) -- (2,0) -- (1,-1) -- cycle;
			\filldraw[color=black, fill=blue!50, very thick] (2,0) -- (2,-2) -- (1,-1) -- cycle;
			\filldraw[color=black, fill=red!50, very thick] (2,-2) -- (0,-2) -- (1,-1) -- cycle;
			\filldraw[color=black, fill=blue!50, very thick] (0,-2) -- (0,0) -- (1,-1) -- cycle;
		\end{scope}
		\begin{scope}[xshift=21cm]
			\filldraw[color=black, fill=blue!50, very thick] (0,0) -- (2,0) -- (1,-1) -- cycle;
			\filldraw[color=black, fill=red!50, very thick] (2,0) -- (2,-2) -- (1,-1) -- cycle;
			\filldraw[color=black, fill=red!50, very thick] (2,-2) -- (0,-2) -- (1,-1) -- cycle;
			\filldraw[color=black, fill=blue!50, very thick] (0,-2) -- (0,0) -- (1,-1) -- cycle;
		\end{scope}
	\end{tikzpicture}
}
    \caption{Example of a set of Wang tiles used in computer graphics~\cite{cohen2003}.}
    \label{fig:example}
\end{figure}
\end{ex}


The following lemma provides a general reduction strategy
to solve tiling problems:
\begin{lem}\label{solvability}
Let $G$ be a board with the set of cells $V$.
Let $W$ be a set of prototiles which is $k$-sequentially permissive \eqref{3const-property}
for all $k\in \{\deg(v)\mid v\in V\}$.
A tiling problem $(G,C,W)$ is always solvable when
there exists a restriction $G|_{U}$ such that $(G|_{U},C,W)$ is always solvable.
\end{lem}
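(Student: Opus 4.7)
The plan is to construct, for an arbitrary boundary constraint $\beta: N(v_0)\to C$ on $G$, a full extension in two phases. First I would tile every cell of $V\setminus U$ one at a time using the sequentially permissive property, and then I would appeal to the hypothesis on $G|_U$ to colour the remaining edges.

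For the first phase, I would enumerate $V\setminus U = \{v^1,\ldots,v^m\}$ in non-increasing order of their distance to $U$ in the full subgraph on $V$. Such an ordering is well-defined and exhaustive because the full subgraph on $V$ is connected and $U$ is non-empty (as $G|_U$ must itself be a board). When processing $v^i$, pick a neighbour $v'$ of $v^i$ on a shortest path from $v^i$ to $U$. Either $v' \in U$, or $v' \in V\setminus U$ at strictly smaller distance to $U$ and hence appearing later in the ordering. In either case the edge $(v^i, v')$ has not yet been coloured, so $v^i$ has at least one free leg. By the $\deg(v^i)$-sequentially permissive property, I can then fix whatever colours are desired on the other free legs, combine them with the already-determined colours of the non-free legs, and use surjectivity of $\pi_{\bar{i}}$ at the index of the leg to $v'$ to choose a colour making the tuple at $v^i$ lie in $W$.

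Once the first phase has finished, every edge incident to $V\setminus U$ carries a colour; in particular, the edges between $U$ and $V\setminus U$ are coloured. These colours, together with the restriction of $\beta$ to the edges of $N(v_0)$ incident to $U$, define a boundary constraint $\beta'$ on $G|_U$, because by the definition of restriction the edges of $N(v_0)$ in $G|_U$ are exactly the former boundary edges at $U$ plus the cut edges from $U$ to $V\setminus U$. By hypothesis $(G|_U, C, W)$ is always solvable, so $\beta'$ admits a full extension $\tau_U: E|_U \to C$. Gluing $\tau_U$ with the colouring produced in the first phase yields a map $\tau: E \to C$ that extends $\beta$, and the tile constraint holds at every cell: at cells of $V\setminus U$ by construction in phase one, and at cells of $U$ because $\tau_U$ is a full extension of $\beta'$.

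The main obstacle is really exhibiting a free leg at every step of the first phase; the distance-based ordering handles this because the shortest-path neighbour towards $U$ is guaranteed to be unprocessed at the moment we tile a cell. Everything else is bookkeeping: sequential permissivity completes each local colouring, and the definition of the restriction $G|_U$ cleanly packages the output of phase one as admissible input to the hypothesis.
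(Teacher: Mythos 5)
Your proof is correct and is essentially the same argument as the paper's: the paper proves the statement by induction, adjoining one cell adjacent to $U$ at a time and tiling the newly adjoined cell first via its free leg into $U$, which unwinds to exactly your distance-ordered sweep of $V\setminus U$ followed by the appeal to the hypothesis on $G|_U$. Indeed, the paper's own remark after the proof describes precisely your ``opposite-direction'' implementation (tiling a spanning tree of $G|_{V\setminus U}$ from the leaves inward), so the only difference is that you make the ordering and the free-leg bookkeeping explicit rather than packaging them in the induction.
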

\begin{proof}
Pick a cell $v\in V$ which is adjacent to $U$,
and consider $(G|_{U\cup\{v\}},C,W)$.
The vertex $v$ has at least one free leg, which is connected to $U$.
So we can tile $v$ by \eqref{3const-property} for any $\beta:N(v_0)\to C$.
Now we are left with $(G|_{U},C,W)$ with extended constraints $\bar{\beta}$,
which can be fully extended by assumption.
Since $V$ is finite, the assertion follows by induction.
\end{proof}
In practice, we proceed in the ``opposite'' way;
we first construct a spanning tree
of $G|_{V\setminus U}$ and tile it from leaves by
\eqref{3const-property}, and then finally 
cells in $U$ 
are tiled by assumption (see \S \ref{Solver}).

\section{Brick Wang tiles}\label{Brick-Wang-Tiles}
From now on, we focus on a special set of prototiles 
which we call the {\em brick Wang tiles}.
We give a necessary and sufficient condition for a 
tiling problem to be always solvable.

The brick Wang tiles were introduced in \cite{jourdan2015} to model wall patterns.
Each tile represents how corners of four bricks meet. 
It is assumed that the edges of the bricks are axis aligned and that each tile is traversed with a straight line, either vertically or horizontally. For aesthetic reasons, crosses are forbidden, where all four bricks are aligned and the corresponding tile is traversed by two straight lines. 
In this model, the colour in the Wang tiles indicates the position of the edge of the brick on the side of the tile. 
The formal definition is given as follows:
\begin{dfn}\label{def:brick-tiles}
We always assume $\# C\ge 3$.
The set of {\em brick Wang prototiles} $W_B$ is defined by
\[
W_B=\{ (c_1,c_2,c_3,c_4)\in C^4 \mid (c_1=c_3 \wedge c_2\neq c_4) \text{ or } (c_1 \neq c_3 \wedge c_2= c_4)\}.
\]
\end{dfn}
It is easy to see that $W_B$ is
$4$-sequentially permissive \eqref{3const-property}.

\begin{figure}[htb]
\centering
\resizebox{0.7\width}{0.7\height}{\begin{tikzpicture}
[
    point/.style = {draw, circle,  fill = black, inner sep = 1pt},
]

\newcommand{\pythagwidth}{3cm}
\newcommand{\pythagheight}{2cm}
\newcommand{\tilesize}{3cm}

\coordinate [] (A) at (0, 0);
\coordinate [] (B) at (0, \tilesize);
\coordinate [] (C) at (\tilesize, \tilesize);
\coordinate [] (D) at (\tilesize, 0);

\draw [dashed] (A) -- (B) -- (C) -- (D) -- (A);

\coordinate [] (E) at (0.75 * \tilesize, \tilesize);
\coordinate [] (F) at (\tilesize, 0.5 * \tilesize);
\coordinate [] (G) at (0.25 * \tilesize, 0);
\coordinate [] (H) at (0, 0.5 * \tilesize);

\coordinate [] (E') at (0.75 * \tilesize, 0.5 * \tilesize);
\coordinate [] (G') at (0.25 * \tilesize, 0.5 * \tilesize);

\draw [very thick] (H) -- (F);
\draw [very thick] (G) -- (G');
\draw [very thick] (E) -- (E');

\node at (H) [point] {};
\node at (F) [point] {};
\node at (E) [point] {};
\node at (G) [point] {};


		
\draw[-latex,thick](-0.5*\tilesize, 0.6* \tilesize)node[left]
        {edge position $c_1$} to[out=-90,in=180] (0, 0.5 * \tilesize);

\draw[-latex,thick](1.1*\tilesize,1.2* \tilesize)node[right]
        {edge position $c_4$} to[out=-180,in=90] (0.75 * \tilesize, 1.05 * \tilesize);

\draw[-latex,thick](1.2*\tilesize, 0.6* \tilesize)node[right]
        {edge position $c_3$} to[out=-90,in=0] (\tilesize, 0.5 * \tilesize);

\draw[-latex,thick](-0.5*\tilesize,-0.7)node[left]
        {edge position $c_2$} to[out=0,in=-90] (0.25 * \tilesize, 0);

\end{tikzpicture}}
\includegraphics[width=0.2\linewidth]{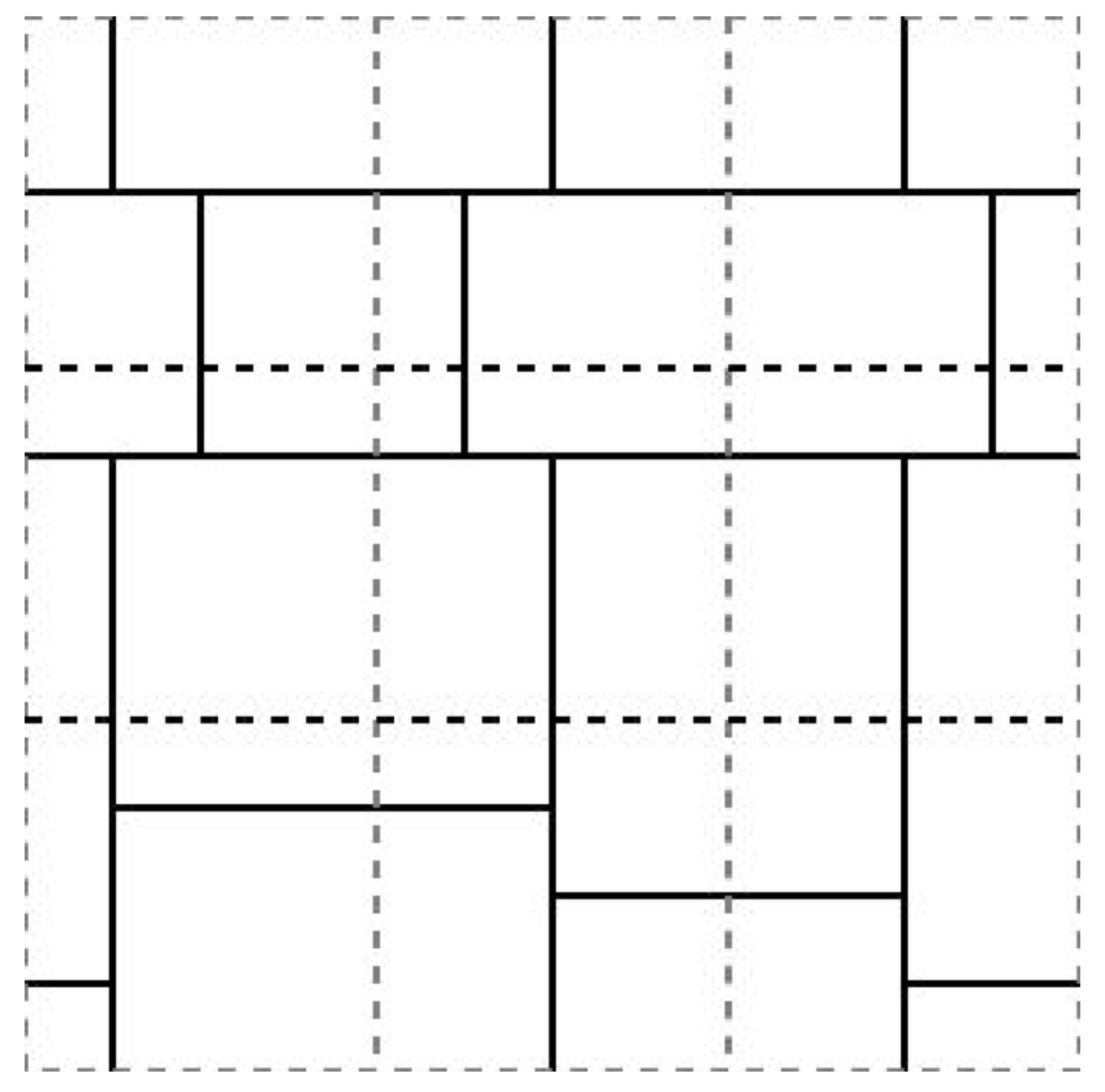}
\caption{A brick Wang tile and a tiling of $3\times 3$-board}\label{fig:brickwangtile}
\end{figure}

In the previous work \cite{jourdan2015,matsushima2016},
only rectangular regions are considered.
Here, we deal with any planar region possibly with holes.
In practice, this allows the user to manually tile a part of the region,
and leave the rest to be automatically completed by our algorithm in \S \ref{Solver}. See Example~\ref{ex:result}.

\begin{dfn}
Consider the graph $\bar{G}=(V\cup\{v_0\},E)$, where  
\begin{itemize}
    \item $V= \{ (i,j)\in \Z\times \Z \}$
    \item $E=\{ ((i,j),(i',j'))\in V\times V\mid |i-i'|+|j-j'|=1 \}$.
\end{itemize}
The set $N((i,j))$ of edges adjacent to a vertex $(i,j)$ are given ordering by 
\[
 ( ((i,j),(i+1,j)), ((i,j),(i,j+1)), ((i,j),(i-1,j)), ((i,j),(i,j-1)).
\]
A board is said to be \emph{square grid} if it is a restriction of $\bar{G}$ to a finite sub-graph.
\end{dfn}

We give a simple necessary and sufficient condition for a square grid board $G$ to be always solvable with $W_B$.
A {\em path} is an ordered sequence of distinct cells 
$(v_1,v_2,\ldots,v_l)$ where consecutive cells are connected by an edge.
A {\em cycle} is an ordered sequence of cells
$(v_1,v_2,\ldots,v_{l-1},v_l=v_1)$ where all $v_i$ are distinct, and consecutive cells are connected by an edge.
A cell $v_i$ in a path or a cycle is said to be \emph{straight} if edges $e_j,e_k\in N(v_i)$
 connected to consecutive cells satisfy $|j-k|=2$ with the ordering of $N(v_i)$.
{\em Corner} cells in a path or a cycle are those that are not straight.


\begin{lem}\label{lem:base}
A tiling problem $(G,C,W_B)$ for a square grid board $G$ 
is always solvable
when the full sub-graph
on the cells is a cycle.
\end{lem}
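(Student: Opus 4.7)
The plan is to tile the cycle explicitly by propagation: I will tile every cell except a designated \emph{anchor} using the $4$-sequentially permissive property of $W_B$, and then argue that a judicious choice at the first tiled cell guarantees that the anchor's constraint is also satisfied.

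First, label the cells of the cycle as $v_1, v_2, \ldots, v_n$ in cyclic order, with cycle edge $e_i$ joining $v_i$ and $v_{i+1 \bmod n}$; each cell has two cycle legs ($e_{i-1}$ and $e_i$) and two boundary legs whose colors are given by $\beta$. I classify each cell as either \emph{straight} (cycle legs on opposite sides) or a \emph{corner} (cycle legs on adjacent sides); a simple closed cycle in the grid has total turning $\pm 360^\circ$, so at least four cells are corners. I pick such a corner as the anchor $v_1$.

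Next, I tile the cells in the order $v_2, v_3, \ldots, v_n$. At $v_2$ both cycle legs are free and the two boundary legs are fixed, so $W_B$ contains multiple valid tiles compatible with the given boundary (at least $|C|$ choices if $v_2$ is straight and $2(|C|-1)$ choices if $v_2$ is a corner), which gives multiple options for the pair $(e_1, e_2)$. For $i = 3, \ldots, n$ the cell $v_i$ has one free cycle leg $e_i$ and three determined legs ($e_{i-1}$ and two boundary), so the sequentially permissive property yields at least one valid $e_i$; this step is deterministic at certain cells (straight cells with differing boundary colors, and corners whose input is not the distinguished boundary value $\alpha_i$) and flexible at others. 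After this process every cycle edge is assigned.

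The crux is the closure at $v_1$: all four of its legs are now determined and must satisfy the brick Wang constraint, which, because $v_1$ is a corner, is an exclusive disjunction on conditions relating $(e_n, e_1)$ to the two boundary values at $v_1$; its valid set has $2(|C|-1)$ elements. The main obstacle is to show that the choice at $v_2$, together with flexibility at subsequent corner cells and at straight cells with coinciding boundary colors, always suffices for this closure to hold. I plan to complete this by a case analysis: supposing no choice works, one shows that the composition of the propagation relations around the cycle would have an overly restrictive structure incompatible with $|C| \geq 3$ and the existence of at least one flexible step. Concretely, as the tile at $v_2$ varies over its $\geq 4$ options and choices at downstream flexible cells vary, the resulting pair $(e_n, e_1)$ covers enough of $C \times C$ to intersect $v_1$'s valid set, completing the tiling.
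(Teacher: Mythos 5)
Your overall strategy---propagate a tiling around the cycle using sequential permissiveness and close it up at a designated anchor---is the same as the paper's, but the proof stops exactly where the actual content of the lemma begins. The closure step is only announced (``I plan to complete this by a case analysis''), and the heuristic you offer in its place is not sound. Modelling each cell as a gate (a corner $C_{a,b}$ sends $in=a$ to $out\neq b$ and $in\neq a$ to $out=b$; a straight cell with equal boundary colours forces $in\neq out$; one with unequal colours forces $in=out$), one sees that in the worst case---a cycle consisting entirely of corners chained as $C_{a_1,a_2},C_{a_2,a_3},\ldots,C_{a_{l-1},a_1}$---the set of reachable pairs $(e_n,e_1)$ is \emph{not} a large subset of $C\times C$: the propagation alternates strictly between a forced output and an output with one excluded value, so the reachable set is a thin ``cross'' of the form $\bigl(\{a_2\}\times(C\setminus\{a_1\})\bigr)\cup\bigl((C\setminus\{a_2\})\times\{a_1\}\bigr)$ (up to orientation). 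That this cross happens to coincide with the anchor's admissible set is a consequence of the cycle length $l-1$ being \emph{even}, which holds because the board is a square grid; for an odd cycle of such gates the closure genuinely fails. So the statement ``the pair $(e_n,e_1)$ covers enough of $C\times C$'' is false in the critical case, and no coverage argument can substitute for the parity argument.

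Concretely, what is missing is the trichotomy the paper uses: (i) if some straight cell with equal boundary colours is adjacent to a corner, anchor there and use $\#C\ge 3$ to choose the two legs of the straight cell at the end; (ii) if all cells are corners but two adjacent ones $C_{a,b},C_{c,d}$ have $b\neq c$, anchor there and resolve the two closure cases by hand; (iii) in the fully chained all-corner case, use the evenness of $l-1$ to assign colours alternately. (Straight cells with unequal boundary colours are identity gates and can be contracted away first.) Your choice of ``an arbitrary corner'' as anchor also does not obviously land you in a configuration where the closure is checkable without this structural case split. Until you carry out an analysis of this kind---and in particular invoke the bipartiteness/parity of grid cycles somewhere---the proof is incomplete.
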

\begin{proof}
Let $\beta: N(v_0)\to C$ be any colouring of the constrained legs,
and $(v_1,\ldots,v_l=v_1)$ be the full sub-graph of $G$ on the cells.
For a cell $v_k$, we say the edge connected to $v_{k-1}$ the incoming leg, denoted by $in(v_k)$,
and the edge connected to $v_{k+1}$ the outgoing leg, denoted by $out(v_k)$.

First, we can assume there is
no straight cells whose constrained legs are
given two different colours
since they have no effect.
Denote by $L_{a,b}$ a corner cell whose constrained 
leg opposite to the incoming leg is coloured with $a\in C$ 
and the outgoing leg with $b\in C$.
Also, denote by $S$ a straight cell whose constrained legs are coloured with the same colour.
These serve as a kind of logical gate: 
\begin{itemize}
\item $\begin{cases} \tau(in(C_{a,b}))=a \Rightarrow \tau(out(C_{a,b}))\neq b \\
 \tau(in(C_{a,b}))\neq a \Rightarrow \tau(out(C_{a,b}))= b \end{cases}$
\item $in(S)\neq out(S)$.
\end{itemize}
We prove the assertion by case by case analysis.
Note that we can change the ordering of $(v_1,\ldots,v_l=v_1)$ cyclically, or reverse it.
\begin{itemize}
\item Assume that $v_1=S, v_2=C_{a,b}$.
Set $\tau(out(C_{a,b}))=b$ and tile $(v_3,v_4,\ldots,v_{l-1})$ sequentially by \eqref{3const-property}.
If $out(v_{l-1})$ is coloured with $c$, choose a colour for $in(S)$ which is different from $c$.
By choosing a colour for $out(S)$ which is different from both $c$ and $a$, we obtain a solution.
\item From the previous case, we can now assume that there is no straight cell.
Assume that $v_1=C_{a,b}, v_2=C_{c,d}$ with $b\neq c$.
Set $\tau(out(C_{c,d}))=d$ and tile $(v_3,v_4,\ldots,v_{l-1})$ sequentially by \eqref{3const-property}.
If $\tau(out(v_{l-1}))=a$, 
choose a colour for $out(C_{a,b})$ which are different from both $b$ and $c$.
If $\tau(out(v_{l-1}))\neq a$, 
set $\tau(out(C_{a,b}))= b$.
This gives a solution.
\item
Now, the only remaining case is when
$(v_1,v_2,\ldots,v_{l-1})=(C_{a_1,a_2},C_{a_2,a_3},\cdots C_{a_{l-1},a_1})$.
Note that $l-1$ should be even since the board is a square grid.
Set $\tau(in(C_{a_i,a_{i+1}}))=a_i$
and $\tau(out(C_{a_i,a_{i+1}}))\neq a_{i+1}$
for odd $i$.
This gives a solution.
\end{itemize}
\end{proof}

Combining Lemma \ref{solvability} with Lemma \ref{lem:base}, we obtain
\begin{thm}
\label{thm:plan_solv}
A tiling problem $(G,C,W_B)$ for a square grid board $G$ is always solvable
if and only if the full sub-graph
on the cells contains a cycle.
\end{thm}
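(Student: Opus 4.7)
My plan is to establish the biconditional by proving each direction separately.

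For the \emph{sufficiency} direction (a cycle exists $\Rightarrow$ always solvable), I would combine Lemmas~\ref{lem:base} and~\ref{solvability}. The first step is the standard observation that any graph containing a cycle contains an \emph{induced} cycle, obtained by taking a shortest one: any chord of a shortest cycle would split it into two strictly shorter cycles, a contradiction. Letting $U$ be the cell set of such an induced cycle, the full sub-graph of $G$ on $U$ is precisely that cycle (no chords), so $G|_U$ is a square grid board to which Lemma~\ref{lem:base} applies, making $(G|_U,C,W_B)$ always solvable. Since every cell of a square grid has degree $4$ and $W_B$ is $4$-sequentially permissive, Lemma~\ref{solvability} lifts always-solvability from $G|_U$ to $G$.

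For the \emph{necessity} direction (no cycle $\Rightarrow$ not always solvable), the full sub-graph on $V$ is connected by the board axiom and acyclic by assumption, hence is a tree. I would prove the claim by induction on $|V|$. The base case $|V|=1$ is immediate: the lone cell has all four legs constrained, and the boundary $\beta$ assigning one fixed colour $c$ to all four legs produces $(c,c,c,c)\notin W_B$, so $\beta$ has no extension.

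For the inductive step with $|V|\geq 2$, pick a leaf $v$ of the tree with unique cell-neighbour $u$, and form $G' = G|_{V\setminus\{v\}}$, which is again a tree or a singleton and thus not always solvable by the inductive hypothesis. Choose a non-extensible boundary $\beta'$ on $G'$ and let $c$ be the colour $\beta'$ assigns to the cut edge corresponding to $(u,v)$. I would then build a boundary $\beta$ on $G$ by copying $\beta'$ on the edges shared between $N_G(v_0)$ and $N_{G'}(v_0)$, and assigning the three new constrained legs of $v$ so that the brick condition at $v$ \emph{uniquely forces} the free edge $(u,v)$ to be coloured $c$: for instance, if the free leg of $v$ is $E$, take $c_W = c$ and $c_N \neq c_S$, which makes $c_E = c$ the only valid choice, and the other three free-leg orientations are symmetric. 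Any extension $\tau$ of $\beta$ then satisfies $\tau(u,v)=c$, so its restriction to the edges of $G'$ (identifying $(u,v)_G$ with the cut edge in $G'$) is an extension of $\beta'$, contradicting the choice of $\beta'$.

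The step I expect to require the most care is the forcing construction in the inductive step: one must check, for each of the four orientations of the free leg at $v$ and each target $c \in C$, that the other three legs can be assigned so as to pin the brick condition to the unique value $c$. This is a short case-check that uses only two distinct colours from $C$, well within $|C|\geq 3$. The sufficiency direction reduces to invoking two previously proved lemmas together with the shortest-cycle observation, so it is essentially routine.
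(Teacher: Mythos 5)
Your proof is correct and follows essentially the same route as the paper: sufficiency by combining Lemma~\ref{solvability} with Lemma~\ref{lem:base} applied to a chordless cycle, and necessity by leaf-removal induction with boundary legs chosen to force the colour of the cut edge. You merely make explicit some details the paper leaves implicit (the passage to an induced cycle, the base case $|V|=1$, and the concrete forcing gadget $c_W=c$, $c_N\neq c_S$), all of which check out.
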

\begin{proof}
We only have to show the ``only if'' part.
Given any board $G$ whose full sub-graph on $V$ is a tree,
we construct $\beta: N(v_0)\to C$ which cannot be fully extended.
We proceed by induction on the number of cells $\# V$.
Pick a leaf $v\in V$ and set $G'=G_{V\setminus \{v\}}$. 
Then, by induction, 
there exists $\beta': N'(v_0)\to C$
which cannot be fully extended in $G'$, where $N'(v_0)$ is the edges of $G'$ adjacent to $v_0$.
Let $u$ be the cell connected to $v$ with the edge $e$.
We can choose the value of $\beta$ for $N(v)\setminus \{e\}$ so that
the colour of $e$ is forced to coincide with $\beta'(e)$ in order to tile $v$.
Then, the resulting $\beta$ cannot be fully extended.
\end{proof}

\begin{rem}
Note that a $2\times 2$ region discussed in \cite{jourdan2015}
is a cycle of length four in our language.
So the above theorem is a generalisation of the result in \cite{jourdan2015}.
\end{rem}

\section{The tiling algorithm for brick Wang tiles}\label{Solver}
In this section, we will discuss an algorithm to decide 
what $\beta$ can be extended
to a full tiling $\tau:E\to C$ and how we can obtain one.
An implementation is given at \cite{impl}.

Fix a tiling problem with 
a finite planar square grid board $G$ with the Brick Wang tiles $W_B$ and a number of colours $\# C\ge 3$. 
Fix also boundary constraints $\beta: N(v_0)\to C$.
Based on Theorem~\ref{thm:plan_solv}, we design an algorithm to 
decide solubility and to construct a valid tiling 
of the problem.
The outline goes as follows: let $G'$ be the full sub-graph on the cells
\begin{enumerate}
    \item Detect a cycle in $G'$ by performing a depth first traversal of $G'$.
    If there is no cycle, use the tree solver given in Lemma~\ref{lem:tree}.
    \item Pick a cell $v$ which is adjacent to but not in the cycle.
    Construct a tiling in a depth-first manner; 
    that is, tile sequentially from the leaves of a spanning tree rooted at $v$ of the connected component of the complement of the cycle. 
    \item Repeat Step 2 until there is no cell other than those in the cycle.
    \item Construct a tiling of the cycle using the procedure of Lemma~\ref{lem:base}.
\end{enumerate}
Since the degree of a cell is always four, then the complexity of the cycle extraction is linear in the number of cells in $G$.
Similarly, we will see that all other operations are performed in a linear time with regards to the number of cells in the board.

The remaining task is now to decide and solve a tree-structured board. To this extent, we introduce the \emph{condition propagation} that will tell how the border colouring put constraints on the inner legs.

A map $f:C \to \{true, false\}$ of one of the following forms is called a {\em condition}:
\begin{itemize}
	\item $(c)$ It must match a given colour $c$; $f(x)=true \Leftrightarrow x=c$.
	\item $(\neg c)$ It must not match a given colour $c$; $f(x)=true \Leftrightarrow x\neq c$.
	\item $(*)$ It can match any colour; $f(x)=true$ for any $x$.
\end{itemize}
To each free leg $e$, we assign a condition 
$f_e$ in such a way that a map $\tau: E\to C$
satisfying $\tau(x)=\beta(x)$ for $x\in N(v_0)$ and
$f_e(\tau(e))=true$ for all free legs $e$ 
gives a full extension of $\beta$.

Fix a cell $v$. If for every free leg of $v$ except for one
is assigned a condition,
a condition for the last free leg is canonically determined by choosing the weakest one (lower in the above list means weaker).
We say that the condition is inferred on the last leg.
The precise rule
of this is described in Figure~\ref{fig:cond_prop_gen}. 
For a tree rooted at $v$, we can propagate the condition recursively 
by inferring from the leafs to the root $v$.
\begin{lem}\label{lem:tree}
Let $G$ be a board whose full sub-graph on the cells is a tree rooted at $v$.
For a tiling problem $(G,C,W_B)$, 
boundary constraints $\beta: N(v_0)\to C$ admit a full extension if and only 
if $v$ can be tiled in such a way that it
matches propagated conditions on its legs.
Furthermore, any such tiling of $v$ can be extended to a full extension.
\end{lem}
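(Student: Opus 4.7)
The plan is to prove the lemma by establishing a stronger characterization of the propagated conditions via induction on the size of the tree. For every free leg $e$ of the tree rooted at $v$, removing $e$ splits the tree into two components, exactly one of which, call it $T_e$, does not contain $v$. I will show that the condition $f_e$ computed by propagation satisfies $f_e(c) = \mathrm{true}$ if and only if $\beta$ restricted to the constrained legs inside $T_e$ admits an extension to a tiling of the sub-board $T_e \cup \{e\}$ in which $e$ is coloured $c$. Granted this claim, the lemma follows immediately: at the root $v$, each of its four legs $e_i$ carries either a point-condition from $\beta$ (if $e_i \in N(v_0)$) or the characterized propagated condition, and a full extension of $\beta$ corresponds exactly to a choice of colours around $v$ that simultaneously satisfies these conditions and forms a brick tile. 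The ``furthermore'' clause drops out of the same correspondence, since one can extend any valid tiling of $v$ back down into each sub-tree by repeatedly invoking the equivalence.

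For the base case of the induction, $T_e$ consists of a single leaf $\ell$ with three constrained legs. A direct case analysis on the brick tile definition shows that the set of feasible colours on $e$ is either a singleton $\{c^*\}$, its complement $C \setminus \{c^*\}$, or all of $C$, and this agrees with the output of the propagation rule applied to three point-conditions. For the inductive step, the child-endpoint $u$ of $e$ has other legs that are either constrained by $\beta$ (point-conditions) or parent-edges of strictly smaller sub-trees (propagated conditions, correct by the inductive hypothesis). A colour $c$ on $e$ extends into $T_e$ if and only if there exist colours satisfying the conditions on the other legs of $u$ which together with $c$ form a brick tile at $u$; verifying that the propagation rule of Figure~\ref{fig:cond_prop_gen} encodes exactly this existential condition as the output on $e$ is the technical core of the argument.

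The main obstacle lies in this verification. The rule must combine three arbitrary conditions of types $(c)$, $(\neg c)$, $(*)$ through the brick tile constraint and return the weakest correct condition on the fourth leg. Because the definition of $W_B$ naturally splits into the two sub-cases $c_1 = c_3$ and $c_1 \neq c_3$, and because conditions come in only three flavours, the case analysis is finite but tedious; the hypothesis $\#C \ge 3$ enters precisely here, guaranteeing that two simultaneous inequality conditions always admit a feasible colour in $C$.
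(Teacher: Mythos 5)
Your proposal is correct and follows essentially the same route as the paper: propagate conditions up the tree, tile the root, and push the solution back down by induction on the number of cells. You make explicit the key invariant that the paper leaves implicit (namely that $f_e$ characterises \emph{exactly} the colours of $e$ extendable into the subtree below it, which is what justifies the ``only if'' direction the paper calls trivial), so your write-up is a more detailed version of the same argument rather than a different one.
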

\begin{proof}
Only if part is trivial by definition.

Observe that the inferred conditions are 
so that for any choice of a matching colour for a free leg of a cell, there exists at least one colouring of 
all other free legs which gives a valid tiling of the cell.

Tile $v$ in such a way that it
matches propagated conditions on its legs.
Consider the connected components obtained by removing $v$ from $G$, with boundary constraints induced by $\beta$ and 
the tiling of $v$.
Now the proof goes by induction on the number of cells.
\end{proof}
The algorithm is illustrated in Figure~\ref{fig:tree_alg}.
Since each cell is visited at most four times, it has a linear complexity with respect to the number of cells.

\begin{figure}[ht]
    \centering
    \resizebox{0.8\linewidth}{!}{
    \begin{tikzpicture}
			\begin{scope}
				\draw[very thick] (0,-2) -- (2,-2) node [midway, below] {$(a)$};
				\draw[very thick] (2,-2) -- (2,-4) node [midway, left] (start) {$\cdot$};
				\draw[very thick] (2,-4) -- (0,-4) node [midway, above] {$(b)$};
				\draw[very thick] (0,-4) -- (0,-2) node [midway, right] {$\mathcal{C}$};
				\draw[very thick] (0,-2) -- (2,-4);
				\draw[very thick] (2,-2) -- (0,-4);
				
				\node[anchor=west] at (4, -2.2) (case1) {$\overline{\mathcal{C}}$ if $a = b$};
				\node[anchor=west] at (4, -3.8) (case2) {$\mathcal{C}$ if $a \neq b$};
				\draw (start) edge[out=0,in=180,->] (case1) ;
				\draw (start) edge[out=0,in=180,->] (case2) ;
			\end{scope}
			
			\begin{scope}[xshift=8cm]
				\draw[very thick] (0,-2) -- (2,-2) node [midway, below] {$(\neg a)$};
				\draw[very thick] (2,-2) -- (2,-4) node [midway, left] (start) {$\cdot$};
				\draw[very thick] (2,-4) -- (0,-4) node [midway, above] {$(b)$};
				\draw[very thick] (0,-4) -- (0,-2) node [midway, right] {$\mathcal{C}$};
				\draw[very thick] (0,-2) -- (2,-4);
				\draw[very thick] (2,-2) -- (0,-4);
				
				\node[anchor=west] at (4, -2.2) (case1) {$\mathcal{C}$ if $a = b$};
				\node[anchor=west] at (4, -3.8) (case2) {$(*)$ if $a \neq b$};
				\draw (start) edge[out=0,in=180,->] (case1) ;
				\draw (start) edge[out=0,in=180,->] (case2) ;
			\end{scope}
			\begin{scope}[xshift=16cm]
				\draw[very thick] (0,-2) -- (2,-2) node [midway, below] {$(\neg a)$};
				\draw[very thick] (2,-2) -- (2,-4) node [midway, left]  {$(*)$};
				\draw[very thick] (2,-4) -- (0,-4) node [midway, above] {$(\neg b)$};
				\draw[very thick] (0,-4) -- (0,-2) node [midway, right] {$\mathcal{C}$};
				\draw[very thick] (0,-2) -- (2,-4);
				\draw[very thick] (2,-2) -- (0,-4);
			\end{scope}

		\end{tikzpicture}
    
    }
    \caption{Propagation of conditions for cells in a tree. When some of the legs of the cell are on the boundary, we consider them input legs with a condition $(c)$ where $c$ is the colour of the leg. $\mathcal{C}$ denotes a given condition and $\overline{\mathcal{C}}$ its negation, with $\overline{(a)}=(\neg a)$, $\overline{(\neg a)} = (*)$ and $\overline{(*)}=(*)$. When any of input conditions is $(*)$, then the output is $(*)$.}
    \label{fig:cond_prop_gen}
\end{figure}
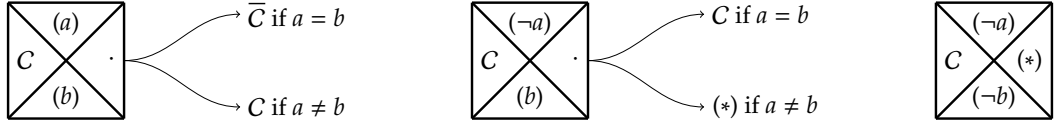



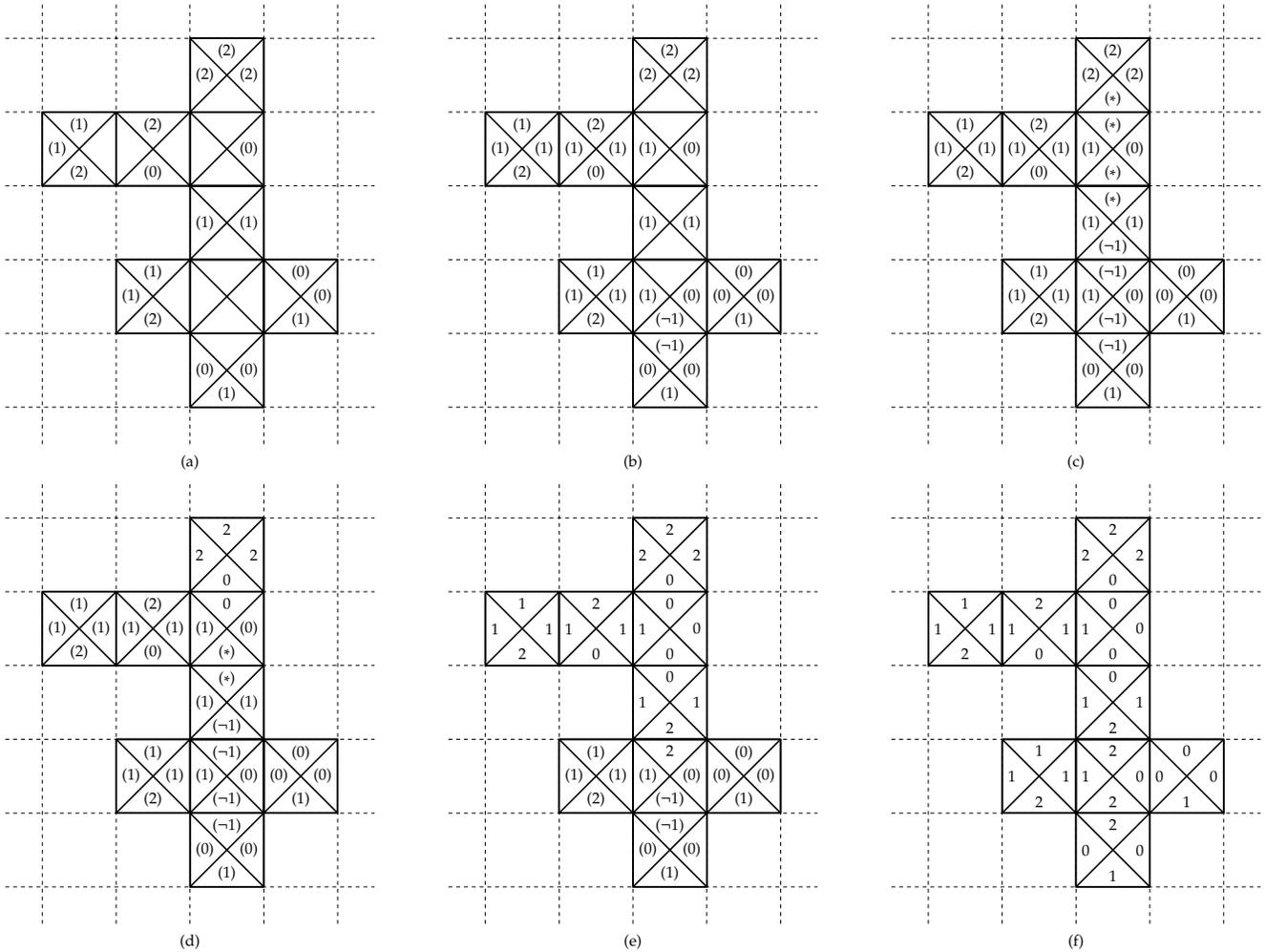
\begin{figure}[ht]
    \centering
    \resizebox{\linewidth}{!}{
	\begin{tikzpicture}
		\begin{scope}
			\draw[step=2,very thin, dashed] (-1.000000,-11.000000) grid (9.000000,1.000000);
            \draw[very thick] (4,0) -- (6,0) node [midway, below] {$(2)$};
            \draw[very thick] (6,0) -- (6,-2) node [midway, left] {$(2)$};
            \draw[very thick] (6,-2) -- (4,-2) node [midway, above] {};
            \draw[very thick] (4,-2) -- (4,0) node [midway, right] {$(2)$};
            \draw[very thick] (4,0) -- (6,-2);
            \draw[very thick] (6,0) -- (4,-2);
            \draw[very thick] (4,0) -- (6,0) -- (6,-2) -- (4,-2) -- cycle;
            \draw[very thick] (0,-2) -- (2,-2) node [midway, below] {$(1)$};
            \draw[very thick] (2,-2) -- (2,-4) node [midway, left] {};
            \draw[very thick] (2,-4) -- (0,-4) node [midway, above] {$(2)$};
            \draw[very thick] (0,-4) -- (0,-2) node [midway, right] {$(1)$};
            \draw[very thick] (0,-2) -- (2,-4);
            \draw[very thick] (2,-2) -- (0,-4);
            \draw[very thick] (0,-2) -- (2,-2) -- (2,-4) -- (0,-4) -- cycle;
            \draw[very thick] (2,-2) -- (4,-2) node [midway, below] {$(2)$};
            \draw[very thick] (4,-2) -- (4,-4) node [midway, left] {};
            \draw[very thick] (4,-4) -- (2,-4) node [midway, above] {$(0)$};
            \draw[very thick] (2,-4) -- (2,-2) node [midway, right] {};
            \draw[very thick] (2,-2) -- (4,-4);
            \draw[very thick] (4,-2) -- (2,-4);
            \draw[very thick] (2,-2) -- (4,-2) -- (4,-4) -- (2,-4) -- cycle;
            \draw[very thick] (4,-2) -- (6,-2) node [midway, below] {};
            \draw[very thick] (6,-2) -- (6,-4) node [midway, left] {$(0)$};
            \draw[very thick] (6,-4) -- (4,-4) node [midway, above] {};
            \draw[very thick] (4,-4) -- (4,-2) node [midway, right] {};
            \draw[very thick] (4,-2) -- (6,-4);
            \draw[very thick] (6,-2) -- (4,-4);
            \draw[very thick] (4,-2) -- (6,-2) -- (6,-4) -- (4,-4) -- cycle;
            \draw[very thick] (4,-4) -- (6,-4) node [midway, below] {};
            \draw[very thick] (6,-4) -- (6,-6) node [midway, left] {$(1)$};
            \draw[very thick] (6,-6) -- (4,-6) node [midway, above] {};
            \draw[very thick] (4,-6) -- (4,-4) node [midway, right] {$(1)$};
            \draw[very thick] (4,-4) -- (6,-6);
            \draw[very thick] (6,-4) -- (4,-6);
            \draw[very thick] (4,-4) -- (6,-4) -- (6,-6) -- (4,-6) -- cycle;
            \draw[very thick] (2,-6) -- (4,-6) node [midway, below] {$(1)$};
            \draw[very thick] (4,-6) -- (4,-8) node [midway, left] {};
            \draw[very thick] (4,-8) -- (2,-8) node [midway, above] {$(2)$};
            \draw[very thick] (2,-8) -- (2,-6) node [midway, right] {$(1)$};
            \draw[very thick] (2,-6) -- (4,-8);
            \draw[very thick] (4,-6) -- (2,-8);
            \draw[very thick] (2,-6) -- (4,-6) -- (4,-8) -- (2,-8) -- cycle;
            \draw[very thick] (4,-6) -- (6,-6) node [midway, below] {};
            \draw[very thick] (6,-6) -- (6,-8) node [midway, left] {};
            \draw[very thick] (6,-8) -- (4,-8) node [midway, above] {};
            \draw[very thick] (4,-8) -- (4,-6) node [midway, right] {};
            \draw[very thick] (4,-6) -- (6,-8);
            \draw[very thick] (6,-6) -- (4,-8);
            \draw[very thick] (4,-6) -- (6,-6) -- (6,-8) -- (4,-8) -- cycle;
            \draw[very thick] (6,-6) -- (8,-6) node [midway, below] {$(0)$};
            \draw[very thick] (8,-6) -- (8,-8) node [midway, left] {$(0)$};
            \draw[very thick] (8,-8) -- (6,-8) node [midway, above] {$(1)$};
            \draw[very thick] (6,-8) -- (6,-6) node [midway, right] {};
            \draw[very thick] (6,-6) -- (8,-8);
            \draw[very thick] (8,-6) -- (6,-8);
            \draw[very thick] (6,-6) -- (8,-6) -- (8,-8) -- (6,-8) -- cycle;
            \draw[very thick] (4,-8) -- (6,-8) node [midway, below] {};
            \draw[very thick] (6,-8) -- (6,-10) node [midway, left] {$(0)$};
            \draw[very thick] (6,-10) -- (4,-10) node [midway, above] {$(1)$};
            \draw[very thick] (4,-10) -- (4,-8) node [midway, right] {$(0)$};
            \draw[very thick] (4,-8) -- (6,-10);
            \draw[very thick] (6,-8) -- (4,-10);
            \draw[very thick] (4,-8) -- (6,-8) -- (6,-10) -- (4,-10) -- cycle;
            \node[anchor=center] at (4, -11.5) {(a)};
		\end{scope}
		\begin{scope}[xshift=12cm]
			\draw[step=2,very thin, dashed] (-1.000000,-11.000000) grid (9.000000,1.000000);
            \draw[very thick] (4,0) -- (6,0) node [midway, below] {$(2)$};
            \draw[very thick] (6,0) -- (6,-2) node [midway, left] {$(2)$};
            \draw[very thick] (6,-2) -- (4,-2) node [midway, above] {};
            \draw[very thick] (4,-2) -- (4,0) node [midway, right] {$(2)$};
            \draw[very thick] (4,0) -- (6,-2);
            \draw[very thick] (6,0) -- (4,-2);
            \draw[very thick] (4,0) -- (6,0) -- (6,-2) -- (4,-2) -- cycle;
            \draw[very thick] (0,-2) -- (2,-2) node [midway, below] {$(1)$};
            \draw[very thick] (2,-2) -- (2,-4) node [midway, left] {$(1)$};
            \draw[very thick] (2,-4) -- (0,-4) node [midway, above] {$(2)$};
            \draw[very thick] (0,-4) -- (0,-2) node [midway, right] {$(1)$};
            \draw[very thick] (0,-2) -- (2,-4);
            \draw[very thick] (2,-2) -- (0,-4);
            \draw[very thick] (0,-2) -- (2,-2) -- (2,-4) -- (0,-4) -- cycle;
            \draw[very thick] (2,-2) -- (4,-2) node [midway, below] {$(2)$};
            \draw[very thick] (4,-2) -- (4,-4) node [midway, left] {$(1)$};
            \draw[very thick] (4,-4) -- (2,-4) node [midway, above] {$(0)$};
            \draw[very thick] (2,-4) -- (2,-2) node [midway, right] {$(1)$};
            \draw[very thick] (2,-2) -- (4,-4);
            \draw[very thick] (4,-2) -- (2,-4);
            \draw[very thick] (2,-2) -- (4,-2) -- (4,-4) -- (2,-4) -- cycle;
            \draw[very thick] (4,-2) -- (6,-2) node [midway, below] {};
            \draw[very thick] (6,-2) -- (6,-4) node [midway, left] {$(0)$};
            \draw[very thick] (6,-4) -- (4,-4) node [midway, above] {};
            \draw[very thick] (4,-4) -- (4,-2) node [midway, right] {$(1)$};
            \draw[very thick] (4,-2) -- (6,-4);
            \draw[very thick] (6,-2) -- (4,-4);
            \draw[very thick] (4,-2) -- (6,-2) -- (6,-4) -- (4,-4) -- cycle;
            \draw[very thick] (4,-4) -- (6,-4) node [midway, below] {};
            \draw[very thick] (6,-4) -- (6,-6) node [midway, left] {$(1)$};
            \draw[very thick] (6,-6) -- (4,-6) node [midway, above] {};
            \draw[very thick] (4,-6) -- (4,-4) node [midway, right] {$(1)$};
            \draw[very thick] (4,-4) -- (6,-6);
            \draw[very thick] (6,-4) -- (4,-6);
            \draw[very thick] (4,-4) -- (6,-4) -- (6,-6) -- (4,-6) -- cycle;
            \draw[very thick] (2,-6) -- (4,-6) node [midway, below] {$(1)$};
            \draw[very thick] (4,-6) -- (4,-8) node [midway, left] {$(1)$};
            \draw[very thick] (4,-8) -- (2,-8) node [midway, above] {$(2)$};
            \draw[very thick] (2,-8) -- (2,-6) node [midway, right] {$(1)$};
            \draw[very thick] (2,-6) -- (4,-8);
            \draw[very thick] (4,-6) -- (2,-8);
            \draw[very thick] (2,-6) -- (4,-6) -- (4,-8) -- (2,-8) -- cycle;
            \draw[very thick] (4,-6) -- (6,-6) node [midway, below] {};
            \draw[very thick] (6,-6) -- (6,-8) node [midway, left] {$(0)$};
            \draw[very thick] (6,-8) -- (4,-8) node [midway, above] {$(\neg 1)$};
            \draw[very thick] (4,-8) -- (4,-6) node [midway, right] {$(1)$};
            \draw[very thick] (4,-6) -- (6,-8);
            \draw[very thick] (6,-6) -- (4,-8);
            \draw[very thick] (4,-6) -- (6,-6) -- (6,-8) -- (4,-8) -- cycle;
            \draw[very thick] (6,-6) -- (8,-6) node [midway, below] {$(0)$};
            \draw[very thick] (8,-6) -- (8,-8) node [midway, left] {$(0)$};
            \draw[very thick] (8,-8) -- (6,-8) node [midway, above] {$(1)$};
            \draw[very thick] (6,-8) -- (6,-6) node [midway, right] {$(0)$};
            \draw[very thick] (6,-6) -- (8,-8);
            \draw[very thick] (8,-6) -- (6,-8);
            \draw[very thick] (6,-6) -- (8,-6) -- (8,-8) -- (6,-8) -- cycle;
            \draw[very thick] (4,-8) -- (6,-8) node [midway, below] {$(\neg 1)$};
            \draw[very thick] (6,-8) -- (6,-10) node [midway, left] {$(0)$};
            \draw[very thick] (6,-10) -- (4,-10) node [midway, above] {$(1)$};
            \draw[very thick] (4,-10) -- (4,-8) node [midway, right] {$(0)$};
            \draw[very thick] (4,-8) -- (6,-10);
            \draw[very thick] (6,-8) -- (4,-10);
            \draw[very thick] (4,-8) -- (6,-8) -- (6,-10) -- (4,-10) -- cycle;
            \node[anchor=center] at (4, -11.5) {(b)};
		\end{scope}
		\begin{scope}[xshift=24cm]
			\draw[step=2,very thin, dashed] (-1.000000,-11.000000) grid (9.000000,1.000000);
            \draw[very thick] (4,0) -- (6,0) node [midway, below] {$(2)$};
            \draw[very thick] (6,0) -- (6,-2) node [midway, left] {$(2)$};
            \draw[very thick] (6,-2) -- (4,-2) node [midway, above] {$(*)$};
            \draw[very thick] (4,-2) -- (4,0) node [midway, right] {$(2)$};
            \draw[very thick] (4,0) -- (6,-2);
            \draw[very thick] (6,0) -- (4,-2);
            \draw[very thick] (4,0) -- (6,0) -- (6,-2) -- (4,-2) -- cycle;
            \draw[very thick] (0,-2) -- (2,-2) node [midway, below] {$(1)$};
            \draw[very thick] (2,-2) -- (2,-4) node [midway, left] {$(1)$};
            \draw[very thick] (2,-4) -- (0,-4) node [midway, above] {$(2)$};
            \draw[very thick] (0,-4) -- (0,-2) node [midway, right] {$(1)$};
            \draw[very thick] (0,-2) -- (2,-4);
            \draw[very thick] (2,-2) -- (0,-4);
            \draw[very thick] (0,-2) -- (2,-2) -- (2,-4) -- (0,-4) -- cycle;
            \draw[very thick] (2,-2) -- (4,-2) node [midway, below] {$(2)$};
            \draw[very thick] (4,-2) -- (4,-4) node [midway, left] {$(1)$};
            \draw[very thick] (4,-4) -- (2,-4) node [midway, above] {$(0)$};
            \draw[very thick] (2,-4) -- (2,-2) node [midway, right] {$(1)$};
            \draw[very thick] (2,-2) -- (4,-4);
            \draw[very thick] (4,-2) -- (2,-4);
            \draw[very thick] (2,-2) -- (4,-2) -- (4,-4) -- (2,-4) -- cycle;
            \draw[very thick] (4,-2) -- (6,-2) node [midway, below] {$(*)$};
            \draw[very thick] (6,-2) -- (6,-4) node [midway, left] {$(0)$};
            \draw[very thick] (6,-4) -- (4,-4) node [midway, above] {$(*)$};
            \draw[very thick] (4,-4) -- (4,-2) node [midway, right] {$(1)$};
            \draw[very thick] (4,-2) -- (6,-4);
            \draw[very thick] (6,-2) -- (4,-4);
            \draw[very thick] (4,-2) -- (6,-2) -- (6,-4) -- (4,-4) -- cycle;
            \draw[very thick] (4,-4) -- (6,-4) node [midway, below] {$(*)$};
            \draw[very thick] (6,-4) -- (6,-6) node [midway, left] {$(1)$};
            \draw[very thick] (6,-6) -- (4,-6) node [midway, above] {$(\neg 1)$};
            \draw[very thick] (4,-6) -- (4,-4) node [midway, right] {$(1)$};
            \draw[very thick] (4,-4) -- (6,-6);
            \draw[very thick] (6,-4) -- (4,-6);
            \draw[very thick] (4,-4) -- (6,-4) -- (6,-6) -- (4,-6) -- cycle;
            \draw[very thick] (2,-6) -- (4,-6) node [midway, below] {$(1)$};
            \draw[very thick] (4,-6) -- (4,-8) node [midway, left] {$(1)$};
            \draw[very thick] (4,-8) -- (2,-8) node [midway, above] {$(2)$};
            \draw[very thick] (2,-8) -- (2,-6) node [midway, right] {$(1)$};
            \draw[very thick] (2,-6) -- (4,-8);
            \draw[very thick] (4,-6) -- (2,-8);
            \draw[very thick] (2,-6) -- (4,-6) -- (4,-8) -- (2,-8) -- cycle;
            \draw[very thick] (4,-6) -- (6,-6) node [midway, below] {$(\neg 1)$};
            \draw[very thick] (6,-6) -- (6,-8) node [midway, left] {$(0)$};
            \draw[very thick] (6,-8) -- (4,-8) node [midway, above] {$(\neg 1)$};
            \draw[very thick] (4,-8) -- (4,-6) node [midway, right] {$(1)$};
            \draw[very thick] (4,-6) -- (6,-8);
            \draw[very thick] (6,-6) -- (4,-8);
            \draw[very thick] (4,-6) -- (6,-6) -- (6,-8) -- (4,-8) -- cycle;
            \draw[very thick] (6,-6) -- (8,-6) node [midway, below] {$(0)$};
            \draw[very thick] (8,-6) -- (8,-8) node [midway, left] {$(0)$};
            \draw[very thick] (8,-8) -- (6,-8) node [midway, above] {$(1)$};
            \draw[very thick] (6,-8) -- (6,-6) node [midway, right] {$(0)$};
            \draw[very thick] (6,-6) -- (8,-8);
            \draw[very thick] (8,-6) -- (6,-8);
            \draw[very thick] (6,-6) -- (8,-6) -- (8,-8) -- (6,-8) -- cycle;
            \draw[very thick] (4,-8) -- (6,-8) node [midway, below] {$(\neg 1)$};
            \draw[very thick] (6,-8) -- (6,-10) node [midway, left] {$(0)$};
            \draw[very thick] (6,-10) -- (4,-10) node [midway, above] {$(1)$};
            \draw[very thick] (4,-10) -- (4,-8) node [midway, right] {$(0)$};
            \draw[very thick] (4,-8) -- (6,-10);
            \draw[very thick] (6,-8) -- (4,-10);
            \draw[very thick] (4,-8) -- (6,-8) -- (6,-10) -- (4,-10) -- cycle;
            \node[anchor=center] at (4, -11.5) {(c)};
		\end{scope}
		\begin{scope}[yshift=-13cm]
			\draw[step=2,very thin, dashed] (-1.000000,-11.000000) grid (9.000000,1.000000);
            \draw[very thick] (4,0) -- (6,0) node [midway, below] {$2$};
            \draw[very thick] (6,0) -- (6,-2) node [midway, left] {$2$};
            \draw[very thick] (6,-2) -- (4,-2) node [midway, above] {$0$};
            \draw[very thick] (4,-2) -- (4,0) node [midway, right] {$2$};
            \draw[very thick] (4,0) -- (6,-2);
            \draw[very thick] (6,0) -- (4,-2);
            \draw[very thick] (4,0) -- (6,0) -- (6,-2) -- (4,-2) -- cycle;
            \draw[very thick] (0,-2) -- (2,-2) node [midway, below] {$(1)$};
            \draw[very thick] (2,-2) -- (2,-4) node [midway, left] {$(1)$};
            \draw[very thick] (2,-4) -- (0,-4) node [midway, above] {$(2)$};
            \draw[very thick] (0,-4) -- (0,-2) node [midway, right] {$(1)$};
            \draw[very thick] (0,-2) -- (2,-4);
            \draw[very thick] (2,-2) -- (0,-4);
            \draw[very thick] (0,-2) -- (2,-2) -- (2,-4) -- (0,-4) -- cycle;
            \draw[very thick] (2,-2) -- (4,-2) node [midway, below] {$(2)$};
            \draw[very thick] (4,-2) -- (4,-4) node [midway, left] {$(1)$};
            \draw[very thick] (4,-4) -- (2,-4) node [midway, above] {$(0)$};
            \draw[very thick] (2,-4) -- (2,-2) node [midway, right] {$(1)$};
            \draw[very thick] (2,-2) -- (4,-4);
            \draw[very thick] (4,-2) -- (2,-4);
            \draw[very thick] (2,-2) -- (4,-2) -- (4,-4) -- (2,-4) -- cycle;
            \draw[very thick] (4,-2) -- (6,-2) node [midway, below] {$0$};
            \draw[very thick] (6,-2) -- (6,-4) node [midway, left] {$(0)$};
            \draw[very thick] (6,-4) -- (4,-4) node [midway, above] {$(*)$};
            \draw[very thick] (4,-4) -- (4,-2) node [midway, right] {$(1)$};
            \draw[very thick] (4,-2) -- (6,-4);
            \draw[very thick] (6,-2) -- (4,-4);
            \draw[very thick] (4,-2) -- (6,-2) -- (6,-4) -- (4,-4) -- cycle;
            \draw[very thick] (4,-4) -- (6,-4) node [midway, below] {$(*)$};
            \draw[very thick] (6,-4) -- (6,-6) node [midway, left] {$(1)$};
            \draw[very thick] (6,-6) -- (4,-6) node [midway, above] {$(\neg 1)$};
            \draw[very thick] (4,-6) -- (4,-4) node [midway, right] {$(1)$};
            \draw[very thick] (4,-4) -- (6,-6);
            \draw[very thick] (6,-4) -- (4,-6);
            \draw[very thick] (4,-4) -- (6,-4) -- (6,-6) -- (4,-6) -- cycle;
            \draw[very thick] (2,-6) -- (4,-6) node [midway, below] {$(1)$};
            \draw[very thick] (4,-6) -- (4,-8) node [midway, left] {$(1)$};
            \draw[very thick] (4,-8) -- (2,-8) node [midway, above] {$(2)$};
            \draw[very thick] (2,-8) -- (2,-6) node [midway, right] {$(1)$};
            \draw[very thick] (2,-6) -- (4,-8);
            \draw[very thick] (4,-6) -- (2,-8);
            \draw[very thick] (2,-6) -- (4,-6) -- (4,-8) -- (2,-8) -- cycle;
            \draw[very thick] (4,-6) -- (6,-6) node [midway, below] {$(\neg 1)$};
            \draw[very thick] (6,-6) -- (6,-8) node [midway, left] {$(0)$};
            \draw[very thick] (6,-8) -- (4,-8) node [midway, above] {$(\neg 1)$};
            \draw[very thick] (4,-8) -- (4,-6) node [midway, right] {$(1)$};
            \draw[very thick] (4,-6) -- (6,-8);
            \draw[very thick] (6,-6) -- (4,-8);
            \draw[very thick] (4,-6) -- (6,-6) -- (6,-8) -- (4,-8) -- cycle;
            \draw[very thick] (6,-6) -- (8,-6) node [midway, below] {$(0)$};
            \draw[very thick] (8,-6) -- (8,-8) node [midway, left] {$(0)$};
            \draw[very thick] (8,-8) -- (6,-8) node [midway, above] {$(1)$};
            \draw[very thick] (6,-8) -- (6,-6) node [midway, right] {$(0)$};
            \draw[very thick] (6,-6) -- (8,-8);
            \draw[very thick] (8,-6) -- (6,-8);
            \draw[very thick] (6,-6) -- (8,-6) -- (8,-8) -- (6,-8) -- cycle;
            \draw[very thick] (4,-8) -- (6,-8) node [midway, below] {$(\neg 1)$};
            \draw[very thick] (6,-8) -- (6,-10) node [midway, left] {$(0)$};
            \draw[very thick] (6,-10) -- (4,-10) node [midway, above] {$(1)$};
            \draw[very thick] (4,-10) -- (4,-8) node [midway, right] {$(0)$};
            \draw[very thick] (4,-8) -- (6,-10);
            \draw[very thick] (6,-8) -- (4,-10);
            \draw[very thick] (4,-8) -- (6,-8) -- (6,-10) -- (4,-10) -- cycle;
            \node[anchor=center] at (4, -11.5) {(d)};
		\end{scope}
		\begin{scope}[xshift=12cm,yshift=-13cm]
			\draw[step=2,very thin, dashed] (-1.000000,-11.000000) grid (9.000000,1.000000);
            \draw[very thick] (4,0) -- (6,0) node [midway, below] {$2$};
            \draw[very thick] (6,0) -- (6,-2) node [midway, left] {$2$};
            \draw[very thick] (6,-2) -- (4,-2) node [midway, above] {$0$};
            \draw[very thick] (4,-2) -- (4,0) node [midway, right] {$2$};
            \draw[very thick] (4,0) -- (6,-2);
            \draw[very thick] (6,0) -- (4,-2);
            \draw[very thick] (4,0) -- (6,0) -- (6,-2) -- (4,-2) -- cycle;
            \draw[very thick] (0,-2) -- (2,-2) node [midway, below] {$1$};
            \draw[very thick] (2,-2) -- (2,-4) node [midway, left] {$1$};
            \draw[very thick] (2,-4) -- (0,-4) node [midway, above] {$2$};
            \draw[very thick] (0,-4) -- (0,-2) node [midway, right] {$1$};
            \draw[very thick] (0,-2) -- (2,-4);
            \draw[very thick] (2,-2) -- (0,-4);
            \draw[very thick] (0,-2) -- (2,-2) -- (2,-4) -- (0,-4) -- cycle;
            \draw[very thick] (2,-2) -- (4,-2) node [midway, below] {$2$};
            \draw[very thick] (4,-2) -- (4,-4) node [midway, left] {$1$};
            \draw[very thick] (4,-4) -- (2,-4) node [midway, above] {$0$};
            \draw[very thick] (2,-4) -- (2,-2) node [midway, right] {$1$};
            \draw[very thick] (2,-2) -- (4,-4);
            \draw[very thick] (4,-2) -- (2,-4);
            \draw[very thick] (2,-2) -- (4,-2) -- (4,-4) -- (2,-4) -- cycle;
            \draw[very thick] (4,-2) -- (6,-2) node [midway, below] {$0$};
            \draw[very thick] (6,-2) -- (6,-4) node [midway, left] {$0$};
            \draw[very thick] (6,-4) -- (4,-4) node [midway, above] {$0$};
            \draw[very thick] (4,-4) -- (4,-2) node [midway, right] {$1$};
            \draw[very thick] (4,-2) -- (6,-4);
            \draw[very thick] (6,-2) -- (4,-4);
            \draw[very thick] (4,-2) -- (6,-2) -- (6,-4) -- (4,-4) -- cycle;
            \draw[very thick] (4,-4) -- (6,-4) node [midway, below] {$0$};
            \draw[very thick] (6,-4) -- (6,-6) node [midway, left] {$1$};
            \draw[very thick] (6,-6) -- (4,-6) node [midway, above] {$2$};
            \draw[very thick] (4,-6) -- (4,-4) node [midway, right] {$1$};
            \draw[very thick] (4,-4) -- (6,-6);
            \draw[very thick] (6,-4) -- (4,-6);
            \draw[very thick] (4,-4) -- (6,-4) -- (6,-6) -- (4,-6) -- cycle;
            \draw[very thick] (2,-6) -- (4,-6) node [midway, below] {$(1)$};
            \draw[very thick] (4,-6) -- (4,-8) node [midway, left] {$(1)$};
            \draw[very thick] (4,-8) -- (2,-8) node [midway, above] {$(2)$};
            \draw[very thick] (2,-8) -- (2,-6) node [midway, right] {$(1)$};
            \draw[very thick] (2,-6) -- (4,-8);
            \draw[very thick] (4,-6) -- (2,-8);
            \draw[very thick] (2,-6) -- (4,-6) -- (4,-8) -- (2,-8) -- cycle;
            \draw[very thick] (4,-6) -- (6,-6) node [midway, below] {$2$};
            \draw[very thick] (6,-6) -- (6,-8) node [midway, left] {$(0)$};
            \draw[very thick] (6,-8) -- (4,-8) node [midway, above] {$(\neg 1)$};
            \draw[very thick] (4,-8) -- (4,-6) node [midway, right] {$(1)$};
            \draw[very thick] (4,-6) -- (6,-8);
            \draw[very thick] (6,-6) -- (4,-8);
            \draw[very thick] (4,-6) -- (6,-6) -- (6,-8) -- (4,-8) -- cycle;
            \draw[very thick] (6,-6) -- (8,-6) node [midway, below] {$(0)$};
            \draw[very thick] (8,-6) -- (8,-8) node [midway, left] {$(0)$};
            \draw[very thick] (8,-8) -- (6,-8) node [midway, above] {$(1)$};
            \draw[very thick] (6,-8) -- (6,-6) node [midway, right] {$(0)$};
            \draw[very thick] (6,-6) -- (8,-8);
            \draw[very thick] (8,-6) -- (6,-8);
            \draw[very thick] (6,-6) -- (8,-6) -- (8,-8) -- (6,-8) -- cycle;
            \draw[very thick] (4,-8) -- (6,-8) node [midway, below] {$(\neg 1)$};
            \draw[very thick] (6,-8) -- (6,-10) node [midway, left] {$(0)$};
            \draw[very thick] (6,-10) -- (4,-10) node [midway, above] {$(1)$};
            \draw[very thick] (4,-10) -- (4,-8) node [midway, right] {$(0)$};
            \draw[very thick] (4,-8) -- (6,-10);
            \draw[very thick] (6,-8) -- (4,-10);
            \draw[very thick] (4,-8) -- (6,-8) -- (6,-10) -- (4,-10) -- cycle;
            \node[anchor=center] at (4, -11.5) {(e)};
		\end{scope}
		\begin{scope}[xshift=24cm,yshift=-13cm]
			\draw[step=2,very thin, dashed] (-1.000000,-11.000000) grid (9.000000,1.000000);
            \draw[very thick] (4,0) -- (6,0) node [midway, below] {$2$};
            \draw[very thick] (6,0) -- (6,-2) node [midway, left] {$2$};
            \draw[very thick] (6,-2) -- (4,-2) node [midway, above] {$0$};
            \draw[very thick] (4,-2) -- (4,0) node [midway, right] {$2$};
            \draw[very thick] (4,0) -- (6,-2);
            \draw[very thick] (6,0) -- (4,-2);
            \draw[very thick] (4,0) -- (6,0) -- (6,-2) -- (4,-2) -- cycle;
            \draw[very thick] (0,-2) -- (2,-2) node [midway, below] {$1$};
            \draw[very thick] (2,-2) -- (2,-4) node [midway, left] {$1$};
            \draw[very thick] (2,-4) -- (0,-4) node [midway, above] {$2$};
            \draw[very thick] (0,-4) -- (0,-2) node [midway, right] {$1$};
            \draw[very thick] (0,-2) -- (2,-4);
            \draw[very thick] (2,-2) -- (0,-4);
            \draw[very thick] (0,-2) -- (2,-2) -- (2,-4) -- (0,-4) -- cycle;
            \draw[very thick] (2,-2) -- (4,-2) node [midway, below] {$2$};
            \draw[very thick] (4,-2) -- (4,-4) node [midway, left] {$1$};
            \draw[very thick] (4,-4) -- (2,-4) node [midway, above] {$0$};
            \draw[very thick] (2,-4) -- (2,-2) node [midway, right] {$1$};
            \draw[very thick] (2,-2) -- (4,-4);
            \draw[very thick] (4,-2) -- (2,-4);
            \draw[very thick] (2,-2) -- (4,-2) -- (4,-4) -- (2,-4) -- cycle;
            \draw[very thick] (4,-2) -- (6,-2) node [midway, below] {$0$};
            \draw[very thick] (6,-2) -- (6,-4) node [midway, left] {$0$};
            \draw[very thick] (6,-4) -- (4,-4) node [midway, above] {$0$};
            \draw[very thick] (4,-4) -- (4,-2) node [midway, right] {$1$};
            \draw[very thick] (4,-2) -- (6,-4);
            \draw[very thick] (6,-2) -- (4,-4);
            \draw[very thick] (4,-2) -- (6,-2) -- (6,-4) -- (4,-4) -- cycle;
            \draw[very thick] (4,-4) -- (6,-4) node [midway, below] {$0$};
            \draw[very thick] (6,-4) -- (6,-6) node [midway, left] {$1$};
            \draw[very thick] (6,-6) -- (4,-6) node [midway, above] {$2$};
            \draw[very thick] (4,-6) -- (4,-4) node [midway, right] {$1$};
            \draw[very thick] (4,-4) -- (6,-6);
            \draw[very thick] (6,-4) -- (4,-6);
            \draw[very thick] (4,-4) -- (6,-4) -- (6,-6) -- (4,-6) -- cycle;
            \draw[very thick] (2,-6) -- (4,-6) node [midway, below] {$1$};
            \draw[very thick] (4,-6) -- (4,-8) node [midway, left] {$1$};
            \draw[very thick] (4,-8) -- (2,-8) node [midway, above] {$2$};
            \draw[very thick] (2,-8) -- (2,-6) node [midway, right] {$1$};
            \draw[very thick] (2,-6) -- (4,-8);
            \draw[very thick] (4,-6) -- (2,-8);
            \draw[very thick] (2,-6) -- (4,-6) -- (4,-8) -- (2,-8) -- cycle;
            \draw[very thick] (4,-6) -- (6,-6) node [midway, below] {$2$};
            \draw[very thick] (6,-6) -- (6,-8) node [midway, left] {$0$};
            \draw[very thick] (6,-8) -- (4,-8) node [midway, above] {$2$};
            \draw[very thick] (4,-8) -- (4,-6) node [midway, right] {$1$};
            \draw[very thick] (4,-6) -- (6,-8);
            \draw[very thick] (6,-6) -- (4,-8);
            \draw[very thick] (4,-6) -- (6,-6) -- (6,-8) -- (4,-8) -- cycle;
            \draw[very thick] (6,-6) -- (8,-6) node [midway, below] {$0$};
            \draw[very thick] (8,-6) -- (8,-8) node [midway, left] {$0$};
            \draw[very thick] (8,-8) -- (6,-8) node [midway, above] {$1$};
            \draw[very thick] (6,-8) -- (6,-6) node [midway, right] {$0$};
            \draw[very thick] (6,-6) -- (8,-8);
            \draw[very thick] (8,-6) -- (6,-8);
            \draw[very thick] (6,-6) -- (8,-6) -- (8,-8) -- (6,-8) -- cycle;
            \draw[very thick] (4,-8) -- (6,-8) node [midway, below] {$2$};
            \draw[very thick] (6,-8) -- (6,-10) node [midway, left] {$0$};
            \draw[very thick] (6,-10) -- (4,-10) node [midway, above] {$1$};
            \draw[very thick] (4,-10) -- (4,-8) node [midway, right] {$0$};
            \draw[very thick] (4,-8) -- (6,-10);
            \draw[very thick] (6,-8) -- (4,-10);
            \draw[very thick] (4,-8) -- (6,-8) -- (6,-10) -- (4,-10) -- cycle;
            \node[anchor=center] at (4, -11.5) {(f)};
		\end{scope}		
	\end{tikzpicture}
}
    \caption{Illustration of the tree solver. First, (a) we pick input legs on the leaves of the tree and an output leg on the root of the tree (here it is the top cell). Then, we propagate the conditions to the nodes in the tree (b) and to the root (c). If the conditions on the root are compatible, we propagate the solution back from the root (d) to the nodes (e) and the leaves (f).}
    \label{fig:tree_alg}
\end{figure}

\begin{ex}\label{ex:result}
Figure~\ref{fig:ex-smiley} shows an example of hand-filled regions (the background, the eyes and the mouth of the smiley),
completed with an automatically tiled region (the face of the smiley).
\begin{figure}
    \centering
    \includegraphics[width=0.2\linewidth]{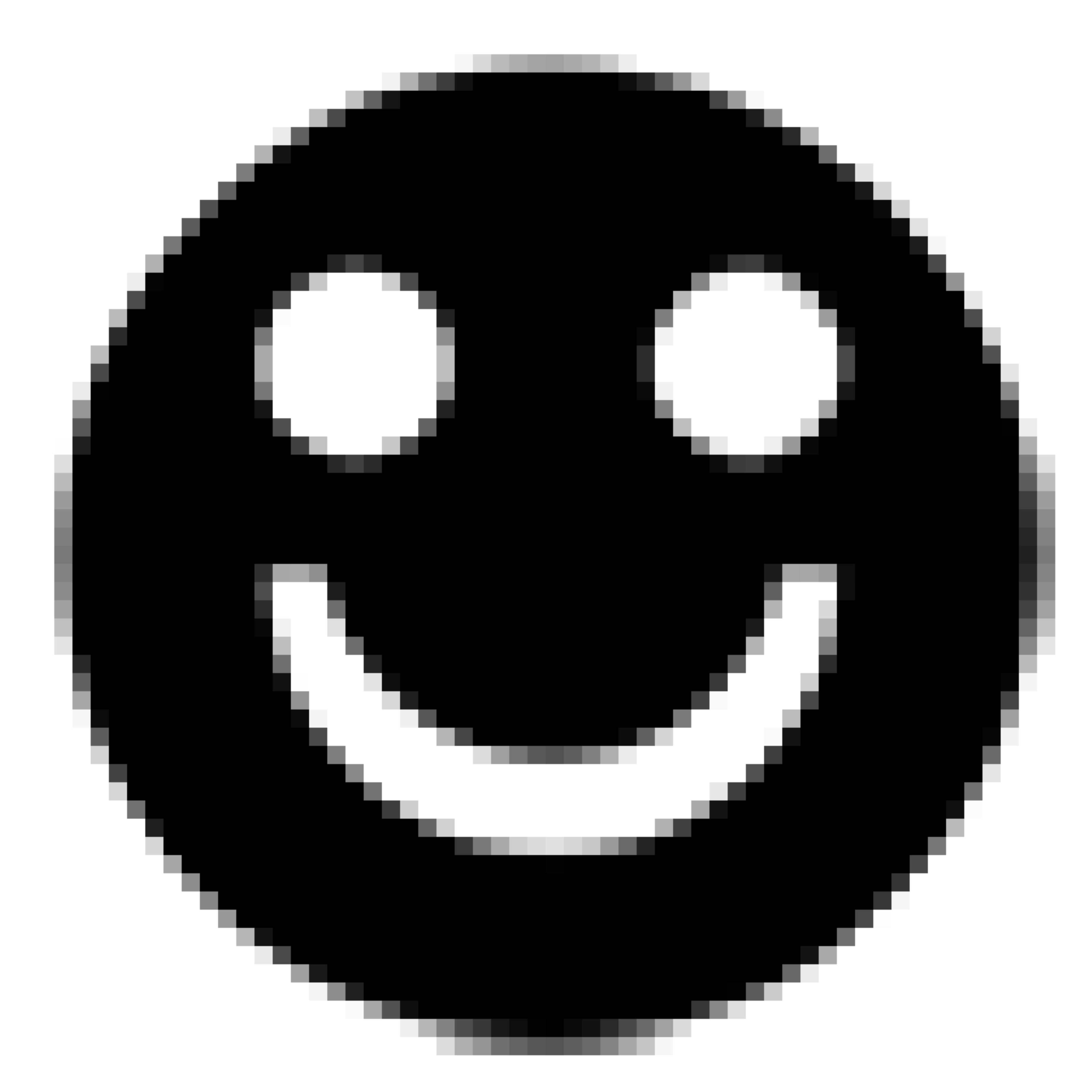}\hspace{1cm}
    \includegraphics[width=0.35\linewidth]{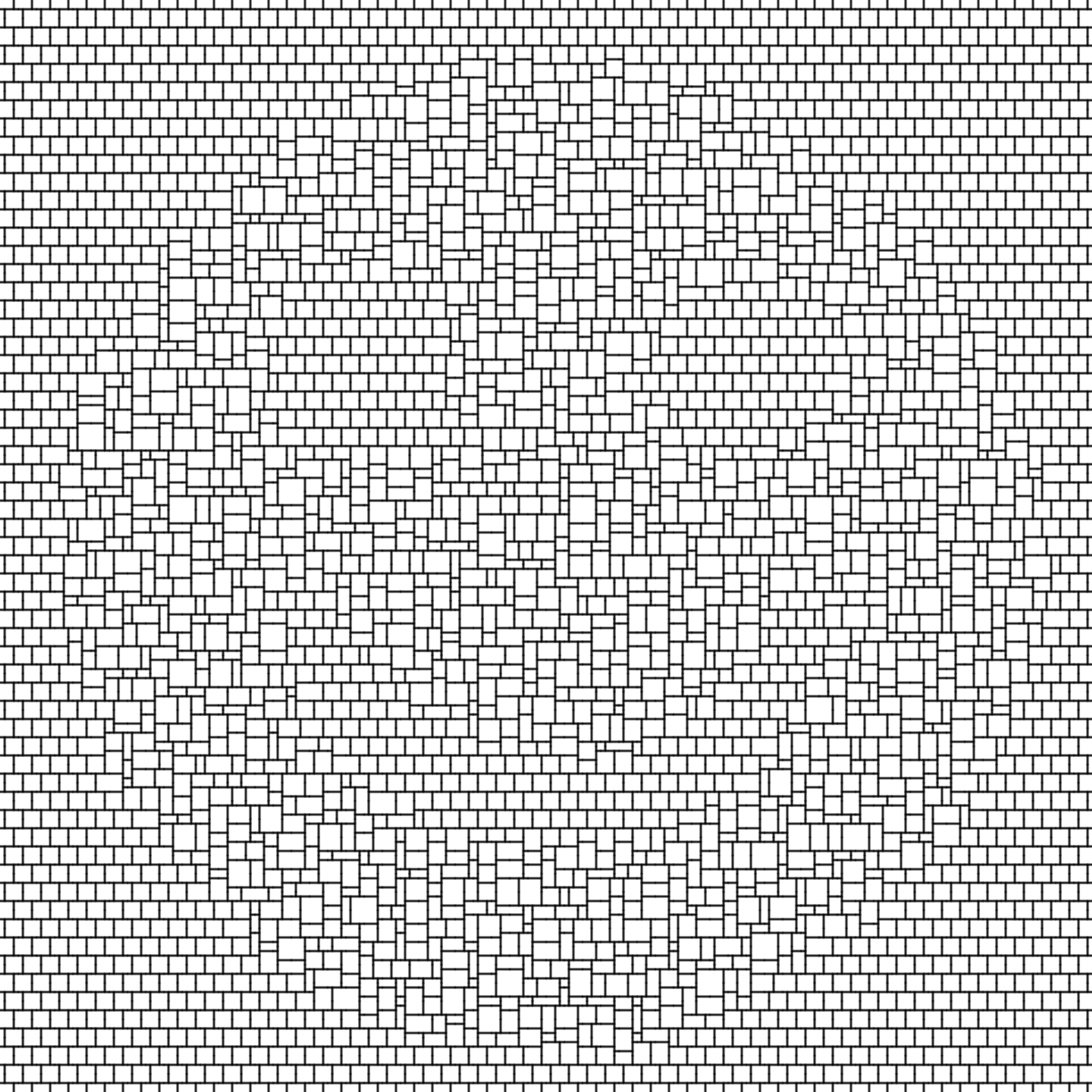}
    \caption{Example of a tiling of a given pattern (top image). The background, the eyes and the mouth are tiled by hand with a regular pattern and the rest (face of the smiley) is tiled automatically.}
    \label{fig:ex-smiley}
\end{figure}
\end{ex}

\section{Conclusion and Future work}\label{Conclusion}
In this paper, we have proposed a graph theoretical approach 
to the Wang tiling problem.
We have introduced a property of Wang tile sets 
which is often possessed by those appearing in computer graphics applications.
For the class of Wang prototile sets with this property,
we have devised a general strategy to reduce a tiling problem 
to its sub-problem.
Then, we used the general result to give
a linear algorithm for the tiling of planar regions with 
the brick Wang tiles.
In future work, 
we would like to consider tiling problems 
on hexagonal/triangular and irregular polygonal surface meshes, and also volumetric meshes,
which would be useful for texturing and point sampling on meshes.
Also, we would like to discuss enumeration of all valid tilings
for a given problem.

\section*{Acknowledgements}
We are grateful to the anonymous referees for their fruitful comments.
The second named author was partially supported by  JST PRESTO Grant Number JPMJPR16E3, Japan.

\bibliographystyle{plain}

\begin{thebibliography}{10}


\bibitem{Coq}
{The Coq Proof Assistant, \url{https://coq.inria.fr/}}.

\bibitem{berger1966}
R.~Berger.
\newblock {The undecidability of the domino problem}.
\newblock {\em Memoirs of the American Mathematical Society}, 66:72, 1966.

\bibitem{cohen2003}
M.~F. Cohen, J.~Shade, S.~Hiller, and O.~Deussen.
\newblock {Wang Tiles for Image and Texture Generation}.
\newblock {\em ACM Transaction on Graphics}, 22(3):287--294, 2003.

\bibitem{culik1996}
K.~Culik.
\newblock {An aperiodic set of 13 Wang tiles}.
\newblock {\em Discrete Mathematics}, 160(1-3):245--251, 1996.

\bibitem{impl}
A.~Derouet-Jourdan,
Implementation of a linear algorithm for Brick Wang tiling,
\url{https://github.com/KyushuUniversityMathematics/LinearWang}

\bibitem{jourdan2015}
A.~Derouet-Jourdan, Y.~Mizoguchi, and M.~Salvati.
\newblock {Wang Tiles Modeling of Wall Patterns}.
\newblock In {\em Mathematical Progress in Expressive Image Synthesis III},
71--81, Springer Singapore, 2016.

\bibitem{Chi}
Chi-Wing Fu and Man-Kang Leung.
\newblock Texture tiling on arbitrary topological surfaces using wang tiles.
\newblock In {\em Proceedings of the Sixteenth Eurographics conference on Rendering Techniques (EGSR '05)},
 Eurographics Association, Aire-la-Ville, Switzerland, Switzerland, 99--104, 2005.

\bibitem{jeandel2015}
E.~Jeandel and M.~Rao.
\newblock {An aperiodic set of 11 Wang tiles}.
\newblock http://arxiv.org/abs/1506.06492, 2015.

\bibitem{kari1996}
J.~Kari.
\newblock {A small aperiodic set of Wang tiles}.
\newblock {\em Discrete Mathematics}, 160(1-3):259--264, 1996.

\bibitem{kopf2006}
J.~Kopf, D.~Cohen, O.~Deussen, and D.~Lischinski.
\newblock {Recursive Wang Tiles for Real-Time Blue Noise}.
\newblock {\em ACM Transaction on Graphics}, 25(3):509--518, 2006.

\bibitem{lagae2006}
A.~Lagae, P.~Dutr\'e.
\newblock {A Alternative for Wang Tiles:
Colored Edges versus Colored Corners}.
\newblock {\em ACM Tarnsactions on Graphics}, 25(4):1442--1459, 2006.

\bibitem{Le}
H. Lewis.
\newblock Complexity of solvable cases of the decision problem for predicate calculus.
\newblock {\em Proc. 19th Symposium on Foundations of Computer Science}, 35--47,  1978.

\bibitem{matsushima2016}
T.~Matsushima, Y.~Mizoguchi, and A.~Derouet-Jourdan.
\newblock {Verification of a brick Wang tiling algorithm}.
\newblock in the proceedings of SCSS2016, 2016.

\bibitem{stam1997}
J.~Stam.
\newblock {Aperiodic texture mapping}.
\newblock Technical Report R046, European Research Consortium for Informatics
  and Mathematics (ERCIM), 1997.

\bibitem{wang1961}
H.~Wang.
\newblock {Proving theorems by pattern recognition$-\mbox{II}$}.
\newblock {\em Bell System Technical Journal}, 40(1):1--41, 1961.

\end{thebibliography}

\end{document}